\documentclass[11pt]{article}
\usepackage[margin=1in]{geometry}

\pdfminorversion=4
\usepackage[pdftex]{graphicx}
\graphicspath{{../pdf/}{../jpeg/}}
\DeclareGraphicsExtensions{.pdf,.jpeg,.png}
\usepackage{amsmath}
\usepackage{algorithm}  
\usepackage{algorithmic}
\usepackage{array}
\usepackage{url}
\usepackage{textcomp}
\usepackage{hyperref}
\usepackage{amssymb,amsmath,url}
\usepackage{bm} 
\usepackage{multirow}
\usepackage{booktabs}	
\usepackage{epstopdf}
\usepackage{color}
\usepackage{subfigure}
\usepackage{pifont}
\usepackage{tikz}
\usepackage{color}
\usepackage{cite}
\usetikzlibrary{calc}
\usepackage{soul} 
\usepackage{graphics}
\setlength{\textfloatsep}{0.08cm}

\newtheorem{theorem}{Theorem}
\newtheorem{lemma}{Lemma}
\newtheorem{proposition}{Proposition}

\allowdisplaybreaks
\newtheorem{definition}{Definition} 

\newtheorem{proof}{Proof}

\setcounter{assumption}{0}
\newtheorem{example}{Example}

\newcommand{\CC}{\mathcal{C}}
\newcommand{\BB}{\mathcal{B}}
\newcommand{\R}{\mathbb{R}}

\newcommand{\N}{\mathbb{N}}
\newcommand{\X}{\mathbb{X}}

\newcommand{\norm}[1]{\left\lVert#1\right\rVert}

\newcommand*{\QEDA}{\hfill\ensuremath{\blacksquare}}

\DeclareMathOperator*{\argmin}{arg\,min}

\DeclareMathOperator*{\argmax}{arg\,max}

\allowdisplaybreaks
\setlength{\skip\footins}{15pt}

\begin{document}

\title{Core-Selecting Mechanisms in Electricity Markets\let\thefootnote\relax\footnotetext{This research was gratefully funded by the European Union ERC Starting Grant CONENE.}}%
\author{Orcun Karaca\thanks{The authors are with the Automatic Control Laboratory, Department of Information Technology and Electrical Engineering, ETH Z\"{u}rich, Switzerland. e-mails: {\tt\small \{okaraca, mkamgar\}@control.ee.ethz.ch}} \and Maryam Kamgarpour\footnotemark[1]
}

\maketitle

\begin{abstract}
	Due to its theoretical virtues, several recent {works propose} the use of the incentive-compatible Vickrey-Clarke-Groves (VCG) mechanism for electricity markets. Coalitions of participants, however, can influence the VCG outcome to obtain higher collective profit. To address this issue, we propose core-selecting mechanisms for their coalition-proofness. We show that core-selecting mechanisms generalize the economic rationale of the locational marginal pricing (LMP) mechanism. Namely, these mechanisms are the exact class of mechanisms that ensure the existence of a competitive equilibrium in linear/nonlinear prices. This implies that the LMP mechanism is also core-selecting, and hence coalition-proof.  In contrast to the LMP mechanism, core-selecting mechanisms exist for a broad class of electricity markets, such as ones involving nonconvex costs and nonconvex constraint sets. In addition, they can approximate truthfulness without the price-taking assumption of the LMP mechanism. Finally, we show that they are also budget-balanced. Our results are verified with case studies based on optimal power flow test systems and the Swiss reserve market.
\end{abstract}
\vspace{.1cm}
\section{Introduction}\label{sec:1}

{A}{ rapid} transformation has been underway for the last couple of decades to replace tight regulation of the electricity industry with competitive market structures~\cite{wilson2002architecture}. This liberalization has been essential to improve economic efficiency and to attract new investments\cite{cramton2017electricity}. Designing electricity markets, however, is a complex task. One inherent complexity is the need to achieve real-time balance of supply and demand because of an inability to store electricity efficiently\cite{cramton2003electricity}. This task is made more difficult by both intertemporal and network dependencies, and more recently, by high penetration of intermittent renewable energy sources~\cite{cramton2017electricity}. Hence, there has been a surge of interest in proposing new electricity market designs~\cite{bosesome}.

This work studies electricity markets in which generators first submit bids representing their underlying economic costs, and an operator then optimizes all the resources to secure a reliable operation. The principal element of such markets is the payment made to each generator since the generators have incentives to strategize around~these payments. As such, payment rules need to be carefully designed to ensure that the generators reveal their true costs which would then yield a stable grid with maximum efficiency, in other words, maximum social welfare~\cite{cramton2017electricity}.

The locational marginal pricing (LMP) mechanism is a well-studied payment rule used in many existing electricity markets~\cite{schweppe2013spot,wu1996folk}. It is based on using the Lagrange multipliers of nodal balance equations in optimal power flow~(OPF) problems to form linear prices. If generators assume that the Lagrange multipliers are independent of their bids, then the LMP mechanism is incentive-compatible, that is, generators are incentivized to submit their true costs. This assumption, also called price-taking, arises from competitive equilibrium theory; however, it often does not hold in practice~\cite{mas1995microeconomic}. In particular, empirical evidence has shown that strategic manipulations have increased the LMP payments substantially in electricity markets~\cite{joskow2001quantitative}. Moreover, LMP payments are well-defined only under convexity assumptions on the bids and the constraints~\cite{o2005efficient}. Convexity is a simplifying abstraction of many realistic electricity market models~\cite{lavaei2012competitive,warrington2012market}. Without such restrictions, it is not possible to guarantee the existence of meaningful Lagrange multipliers~\cite{bikhchandani1997competitive}.

In contrast to the LMP mechanism, the Vickrey-Clarke-Groves (VCG) mechanism ensures that truthful bidding is the dominant-strategy Nash equilibrium~\cite{vickrey1961counterspeculation,clarke1971multipart,groves1973incentives}. Consequently, several recent works have proposed the use of this payment rule in a broad class of electricity markets~\cite{samadi2012advanced, pgs,xu2017efficient,karaca2017game,karaca2018weak}. However, the VCG mechanism is often deemed undesirable for practical applications since coalitions of generators can strategically bid to increase their collective utility. As a result, it is susceptible to different kinds of manipulations such as collusion and shill bidding~\cite{hobbs2000evaluation,ausubel2006lovely,rothkopf2007thirteen}. In particular, our recent work has shown that these manipulations are eliminated from the VCG mechanism only under convex bids and polymatroid-type constraints~\cite{karaca2017game}.\footnote{Polymatroid is a polytope associated with a submodular function~\cite{schrijver2003combinatorial}.} This restricted market problem does not capture realistic electricity market models. Moreover, since the same participants are involved in similar market transactions day after day, electricity markets can particularly be exposed to collusion and shill bidding~\cite{anderson2011implicit}.  

 These challenges motivate our work on coalition-proof mechanisms that are immune to collusion and shill bidding. These mechanisms are based on the idea of the core in coalitional game theory~\cite{osborne1994course}, and are referred to as core-selecting mechanisms. They were first proposed for auctions of multiple items in \cite{day2008core}, and then applied~to auctions of continuous goods (e.g., electrical power) in \cite{karaca2017game}. In this paper, we show that core-selecting mechanisms are the exact class of mechanisms that ensure the existence of a competitive equilibrium under linear/nonlinear prices. Our result implies that the LMP mechanism is core-selecting, and hence coalition-proof. The equivalence of competitive equilibrium and the core was first shown in~\cite{shapley1971assignment} for item exchanges under unit demand and unit supply, for example, house allocation problems. Our work extends this result to auctions involving continuous goods, second stage costs, and general market constraints.
 In contrast to the LMP mechanism, we show that core-selecting mechanisms are applicable to a broad class of electricity markets, such as the ones involving nonconvex costs and/or nonconvex constraint sets. In addition, core-selecting mechanisms can approximate incentive-compatibility without the price-taking assumption of the LMP mechanism. We highlight that the benefits of core-selecting mechanisms are accompanied by nonlinear pricing which might be regarded as a big shift for some existing markets~\cite{bosesome}.

Our contributions are as follows. First, we prove that for electricity markets any competitive equilibrium is efficient. Second, we prove that a mechanism is core-selecting if and only if it ensures the existence of a competitive equilibrium. This implies that the LMP mechanism is core-selecting. Third, we derive an upper bound on the profit a bidder can obtain by a unilateral deviation from its truthful bid, under any core-selecting mechanism. Using this bound, we propose a mechanism that maximizes incentive-compatibility among all core-selecting mechanisms. Fourth, we show that any core-selecting mechanism is budget-balanced in exchange markets. Finally, we verify our results with case studies based on real-world electricity market data.

In Section~\ref{sec:2}, we introduce a general class of electricity markets and discuss desirable properties in mechanism design. Using tools from coalitional game theory and competitive equilibrium theory, Section~\ref{sec:3} proves the equivalence of core and competitive equilibrium. Then, we investigate incentive-compatibility and budget-balance. Section~\ref{sec:5} presents the numerical results. All proofs are relegated to the appendix.

\section{Mechanism Framework}\label{sec:2}

We start with a generic one-sided electricity market reverse auction.\footnote{Our results can be generalized to exchanges, see Section~\ref{sec:4}.} The set of participants consists of the central operator $l=0$ and the bidders $L=\{1,\ldots,\lvert L\rvert\}$. Let there be $t$ types of power supplies in the auction. These types can include control reserves, also known as ancillary services~\cite{abbaspourtorbati2016swiss}, or active and reactive power injections differentiated by their nodes, durations, and scheduled times. Supplies of the same type from different bidders are fungible to the central operator. We assume that each bidder~$l$ has a private true cost~function $c_l: \X_l \rightarrow \mathbb R_+$, $\X_l\subseteq\R_+^t$. We further assume that $0\in \X_l$ and~$c_l(0)=0$. This assumption holds for many electricity markets, for instance, control reserve markets and day-ahead markets that include generators' start-up costs~\cite{abbaspourtorbati2016swiss, xu2017efficient}.  Each bidder~$l$ then submits a bid function to the central operator, denoted by $b_l:\hat \X_l \rightarrow \mathbb R_+$, where $0\in \hat \X_l\subseteq\R_+^t$ and~$b_l(0)=0$.\footnote{There are markets that include shut-down costs or minimum output levels~\cite{xu2017efficient}. To address these markets, we draw attention to the properties and the results for which the assumptions, $c_l(0)=0$ and $b_l(0)=0$, are pivotal. } 

Given the bid profile $\BB=\{b_l\}_{l\in L}$, \textit{a mechanism} defines an allocation rule $x_l^*(\BB)\in \hat \X_l$ and a payment rule $p_l(\BB)\in\R$ for each bidder $l$. 
In electricity markets, the allocation rule is determined by the economic dispatch, that is, minimizing the procurement cost subject to some security constraints
\begin{equation}\label{eq:main_model}
\begin{split}
J(\BB)=&\min_{x\in \hat \X,\,y}\,\, \sum\limits_{l\in L} b_l(x_l) + d(x,y)\\
&\ \ \mathrm{s.t.}\ \ h(x,y)= 0,\, g(x,y)\leq 0.\\
\end{split}
\end{equation}
where $\hat \X=\prod_{l\in L}\hat \X_{l}$. In the case of a two-stage electricity market model,~$y\in\R^p$ may correspond to the second stage variables and the function~$d:\R^{t\rvert L\rvert}\times\R^p\rightarrow \R$ could be the second stage cost. The function~$h:\R^{t\rvert L\rvert}\times\R^{p}\rightarrow \R^{q_1}$ defines the equality constraints and the function~$g:\R^{t\rvert L\rvert}\times\R^{p}\rightarrow \R^{q_2}$ defines the inequality constraints. These constraints may correspond to the network balance constraints, and voltage and line limits in OPF problems. Alternatively, they may also correspond to procurement of the required amounts of power supplies, for instance, in the Swiss control reserve markets accepted reserves must have a deficit probability of less than 0.2\%. {Thus, problem (\ref{eq:main_model}) defines a general class of electricity market problems, including energy-reserve co-optimized markets\cite{xu2017efficient}, stochastic markets \cite{abbaspourtorbati2016swiss,conejo2010decision}, and AC-OPF problems~\cite{lavaei2012competitive}.} As a remark, if the problem \eqref{eq:main_model} is infeasible, the objective value is unbounded, $J(\BB)=\infty$.

Let the optimal solution of \eqref{eq:main_model} be denoted by $x^*(\BB)\in \hat \X$ and $y^*(\BB)\in\R^{p}$.\footnote{We assume that in case of multiple optima there is a tie-breaking rule.} We assume that the utility of bidder $l$ is linear in the payment received; $u_l(\BB)=p_l(\BB)-c_l(x^*_l(\BB))$.  A bidder whose bid is not accepted, $x_l^*(\BB)=0$, is not paid and $u_l(\BB)=0$. 
The utility of the operator $u_0(\BB)$ is defined by the total payment, namely, $u_0(\BB)=-\sum_{l\in L} p_l(\BB) - d(x^*(\BB),y^*(\BB))$.
This total payment can be an expected value when the function~$d$ is an expected second stage cost. As a remark, if the problem \eqref{eq:main_model} is infeasible, the utility of the operator is given by $u_0(\BB)=-\infty$.

There are several fundamental properties we desire for the mechanism. A mechanism is \textit{in\-dividually-rational} (IR) if bidders do not face negative utilities, $u_l(\BB)\geq 0$ for all $l\in L$. This property is also often referred to as voluntary participation or cost recovery. A mechanism is \textit{efficient} if the sum of all the utilities $\sum_{l=0}^{\lvert L\rvert} u_l(\BB)$ is maximized. From the definition of the utilities, we have $\sum_{l=0}^{\lvert L\rvert} u_l(\BB)=-\sum_{l\in L} c_l(x^*_l(\BB))- d(x^*(\BB),y^*(\BB))$. Notice that this value is maximized if we are solving for the optimal allocation of the market in~\eqref{eq:main_model} under the condition that the bidders submitted their true costs $\{c_l\}_{l\in L}$. As a result, we can attain efficiency by eliminating potential strategic manipulations.

Connected with the observation above, we say that a mechanism is \textit{dominant-strategy incentive-compatible} (DSIC) if the truthful bid profile $\mathcal C=\{c_l\}_{l\in L}$ is the dominant-strategy Nash equilibrium. In other words, every bidder finds it more profitable to bid truthfully, regardless of what others bid. However, unilateral deviations are not the only strategic manipulations we need to consider in order to ensure that the bidders reveal their true costs.

As the last desirable property, we consider immunity to collusion and shill bidding and this is the main topic of this paper. 
Bidders $K\subseteq L$ are \textit{colluders} if they obtain higher collective utility by changing their bids from $\mathcal C_K=\{c_l\}_{l\in K}$ to $\BB_K=\{b_l\}_{l\in K}$. A bidder $l$ is a \textit{shill bidder} if there exists a set $S$ and bids $\BB_S=\{b_k\}_{k\in S}$ such that the bidder~$l$ finds participating with bids $\BB_S$ more profitable than participating with a single truthful bid~$\CC_l$. Finally, by \textit{coalition-proof}, we mean that a group of bidders whose~bids are not accepted when bidding their true cost cannot profit from collusion, and no bidder can profit from using shill bids. We remark that it is not possible to achieve immunity to collusion from all sets of bidders. For instance, no mechanism can eliminate the situation where all bidders inflate their bid prices simultaneously, see also the examples in~\cite{beck2009revenue}.

Since the bidders strategize around the payment rule, the design of the payment rule plays a crucial role in attaining the aforementioned properties. In light of the discussions above, we discuss two well-studied payment rules that fail to attain some of these properties for the general class of electricity markets in~\eqref{eq:main_model}.

\subsection{The LMP mechanism}

The \textit{LMP mechanism} is adopted in markets where polytopic DC-OPF constraints and nondecreasing convex bids are considered. For simplicity in notation, assume there is a single bidder at each node of the network. Under this assumption, each bidder is supplying a one-dimensional power supply of a unique type. Then, the payment rule is $p_l(\BB)=\lambda_l^*(\BB)\, x^*_l(\BB)$, where $\lambda_l^*(\BB)\in\R$ is the Lagrange multiplier of the $l^{\text{th}}$ nodal balance equality constraint. See \cite{wu1996folk} for an exposition on the~calculation of the LMP payments from the Karush-Kuhn-Tucker conditions of DC-OPF problems.

Assume each bidder is a price-taker, in other words, considers the Lagrange multiplier of its node to be independent of its bid. Then, in addition to being IR, the LMP mechanism is DSIC.\footnote{IR requires $c_l(0)=0$ and $b_l(0)=0$ for all $l\in L$.} 
However, this economic rationale of the LMP mechanism involves a strong assumption not found in practice~\cite{joskow2001quantitative}. Under the LMP mechanism, a~bidder can maximize its utility by both inflating its bids and withholding its maximum supply~\cite{ausubel2014demand}. On the positive side,~in Section~\ref{sec:3}, we show that this mechanism is coalition-proof.

Another aspect to consider is that the economic rationale of the Lagrange multipliers follows from strong duality~\cite{bikhchandani1997competitive,lavaei2012competitive,warrington2012market}.\footnote{There are extensions of LMP through uplift payments to address the duality gap specifically arising from the unit commitment costs \cite{o2005efficient,gribik1993market}. However, the following discussions also apply to those works.} For DC-OPF problems, strong duality is implied by the convexity of the bid profile and the linearity of the constraints\cite{bertsekas1999nonlinear}. Strong duality, however, may not hold for the optimization problem~\eqref{eq:main_model}, and hence the Lagrange multipliers may not be meaningful in an economic sense. For instance, nonlinear AC-OPF constraints are known to yield a non-zero duality~gap for many practical problems\cite{lesieutre2011examining}, and sufficient conditions for zero duality gap are in general restrictive\cite{low2014convex}. 

\subsection{The VCG mechanism}
As an alternative, the \textit{VCG mechanism} is characterized by
$p_l(\BB)=b_l(x^*_l(\BB))+(h(\BB_{-l})-J(\BB))$, where $\BB_{-l}=\{b_k\}_{k\in L\setminus l}$.
The function $h(\BB_{-l})\in\R$ must be chosen carefully to ensure IR. A well-studied choice is the \textit{Clarke pivot rule} $
h(\BB_{-l})=J(\BB_{-l}),  $
where $J(\BB_{-l})$ is the optimal value of (\ref{eq:main_model}) with $x_l=0$, removing the bidder $l$ from both the objective and the constraints.\footnote{The general form is referred to as the Groves mechanism. The Clarke pivot rule generates the minimum total payment ensuring the IR property\cite{krishna1998}. IR requires $c_l(0)=0$ and $b_l(0)=0$ for all $l\in L$.} This mechanism is well-defined  under the assumption that a feasible solution exists when a bidder is removed. This is a practical assumption in electricity markets~\cite{xu2017efficient}.\footnote{For instance, this assumption holds for almost all IEEE test systems\cite{christie2000power}.} 
The~VCG mechanism has been shown to satisfy IR, efficiency, and DSIC for the market in~\eqref{eq:main_model}~\cite{karaca2017game}. This result is a generalization of the works in~\cite{vickrey1961counterspeculation,clarke1971multipart,groves1973incentives} which do not consider continuous goods, second stage cost and general constraints. 
Despite these theoretical properties, the VCG mechanism can suffer from collusion and shill bidding which can result in a loss of efficiency, see the examples in~\cite{karaca2017game}.

In our recent work, we showed that the VCG mechanism can be guaranteed to attain coalition-proofness only in restricted settings, such as polymatroid-type constraints and convex bids \cite[Theorems~4, 5]{karaca2017game}. Though the convex bid assumption may be reasonable in certain markets, the polymatroid requirement is restrictive. In particular, the polytope representing the constraints of a power system is in general not a polymatroid, and hence, in many electricity markets, the VCG mechanism suffers from collusion and shill bidding. 

Thus, the payment rules discussed above fail to attain some of the desired properties for the general class of electricity markets in~\eqref{eq:main_model}. Specifically, the LMP mechanism is not DSIC, and it may not be applicable to the general setting. On the other hand, the VCG mechanism is DSIC, but it is vulnerable to coalitional manipulations. In the next section, we focus our attention on mechanisms that attain the coalition-proofness property. 

\section{Coalition-Proof Mechanisms}\label{sec:3}
In this section, we study core-selecting mechanisms. These mechanisms are the exact class of coalition-proof mechanisms~\cite{day2008core} in the sense that a group of bidders whose bids are not accepted when bidding truthfully cannot profit from collusion, and a shill bidder cannot profit more than its VCG utility under truthful bidding, see \cite[Theorem~6]{karaca2017game}. 
We start by showing that, in addition to being coalition-proof, core-selecting mechanisms generalize the economic rationale of the LMP mechanism. Namely, they are also the exact class of mechanisms that ensure the existence of a competitive equilibrium. 
This result implies that the LMP mechanism is core-selecting, and hence coalition-proof. Since the LMP mechanism may not exist, we then define a class of core-selecting mechanisms, applicable to any electricity market modeled by~\eqref{eq:main_model}, that also approximates DSIC without the price-taking assumption. Finally, we prove the budget-balance of core-selecting mechanisms.

\subsection{Coalition-proofness via competitive equilibrium}
To address coalition-proofness, we first define the \textit{revealed utilities}, that is, the utilities with respect to the submitted bids. We then bring in the core from coalitional game theory to characterize coalition-proof mechanisms~\cite{osborne1994course}. 

The revealed utility of bidder $l$ is defined by $\bar u_l(\BB)=p_l(\BB)-b_l(x^*_l(\BB))$, and the revealed utility of the operator is the same as its utility, $\bar u_0(\BB)=-\sum_{l\in L} p_l(\BB) - d(x^*(\BB),y^*(\BB))$.  
For every $S\subseteq L$, let $J(\BB_S)$ be defined by
\begin{equation*}
\begin{split}
J(\BB_S)=&\min_{x\in \hat \X,\,y}\,\, \sum\limits_{l\in S} b_l(x_l) + d(x,y)\\
&\ \ \mathrm{s.t.}\ \ h(x,y)= 0,\, g(x,y)\leq 0,\, x_{-S}=0,\\
\end{split}
\end{equation*}
where the stacked vector $x_{-S}\in\R_+^{t(\lvert L\rvert -\lvert S\rvert)}$ is defined by omitting the subvectors from the set $S$. It is straightforward to see that this function is nonincreasing, $J(\BB_R) \leq J(\BB_S)$, for~$S\subseteq R$. Then, the \textit{core} is defined by
\begin{equation}\label{eq:coref}
\begin{split}
Core(\BB)=\Big\{\bar u\in\R\times\R^{\rvert L\rvert}_+ \,\rvert\, &\bar u_0+\sum\limits_{l\in L}\bar u_l=-J(\BB),\\
&\bar u_0+\sum\limits_{l\in S}\bar u_l\geq-J(\BB_S),\, \forall S \subset L \Big\}.
\end{split}
\end{equation}

A mechanism is said to be \textit{core-selecting} if its payments ensure that the revealed utilities lie in the core. Then, the payment rule is given by
$p_l(\BB)=b_l(x^*_l(\BB))+\bar u_{l}(\BB),$ 
where $\bar u(\BB)\in Core(\BB)$. For instance, the pay-as-bid mechanism is a core-selecting mechanism where $\bar u_l(\BB)=0$ for all $l\in L$, and $\bar u_0(\BB)=-J(\BB)$\cite{orcun2018game}. This implies that the core is nonempty and these mechanisms exist. 
Furthermore, core-selecting mechanisms are IR since the revealed utilities are restricted to the nonnegative orthant for the bidders in \eqref{eq:coref}.\footnote{If $b_l(0)\neq0$, $J$ may not be nonincreasing, pay-as-bid utilities may not lie in the core, and the core may not be nonempty. In this case, to guarantee that the core exists, individual-rationality constraints $\bar u_{-0}=(\bar u_1,\ldots,\bar u_{|L|})^\top\in\R^{\rvert L\rvert}_+$ can be removed. This new core is nonempty since it is downward open for $\bar u_{-0}$.}   

Core-selecting mechanisms are coalition-proof for the market in~\eqref{eq:main_model}~\cite{karaca2017game}. The core ensures this property since the inequality constraints in \eqref{eq:coref} restrict the revealed utilities such that they cannot be improved upon by forming coalitions. In fact, the VCG mechanism fails to attain coalition-proofness since it is in general not a core-selecting mechanism.
Our main result of this section shows that the core-selecting mechanisms offer an economic rationale similar to that of the LMP mechanism.
To state this result, we bring in tools from competitive equilibrium theory~\cite{mas1995microeconomic}. 
\begin{definition}\label{def:compeq}
	An allocation $x^*\in\R^{t|L|}_+$ and a set of price functions $\{\psi_l\}_{l\in L}$, where $\psi_l:\R_+^t\rightarrow\R$ and $\psi_l(0)=0$, constitute a competitive equilibrium if and only if the following two conditions hold
	\begin{align}
	&\text{(i)}\ x_l^* \in \argmax_{x_l\in \X_l}\, \psi_l(x_l)-c_l(x_l), \forall l\in L,\label{eq:firstcondition} \\
	&\text{(ii)}\ x^* \in \argmin_{x\in \R_+^{t|L|}} \left\{\min_{\substack{y:\, h(x,y)= 0\\ g(x,y)\leq 0 }}\,\, \sum\limits_{l\in L} \psi_l(x_l) + d(x,y) \right\}. \label{eq:secondcondition}
	\end{align}
\end{definition}

As a remark, price functions~$\{\psi_l\}_{l\in L}$ map from power supplies to prices, whereas the payments~$\{p_l\}_{l\in L}$ map from the bid profile to the payment received by each bidder.
Notice that the functions $\{\psi_l\}_{l\in L}$ can be nonlinear and bidder-dependent. As a result, this definition extends beyond the traditional Walrasian competitive equilibrium which requires a linear price for each type of supply \cite{mas1995microeconomic,bikhchandani2002package,parkes2002indirect,parkes2006iterativec}.

The central assumption of a competitive equilibrium is that participants do not anticipate their effects on the price functions. Consequently, bidders are willing to supply their allocations since by~\eqref{eq:firstcondition} these allocations maximize their utilities. Furthermore, these allocations are also optimal for the central operator by~\eqref{eq:secondcondition}. We highlight that a competitive equilibrium is not a game-theoretic solution, under either the cooperative or noncooperative approaches. Instead, it is a set of consistency conditions that models how payments would be formed from economic interactions. These conditions are considered to be a powerful benchmark in the economic~analysis~\cite{mas1995microeconomic}.

{Next, we show that for competitive equilibrium analysis we can restrict our attention to the optimal allocation of~\eqref{eq:main_model} under the truthful bid profile. To this end, the following lemma proves that a competitive equilibrium is efficient. \begin{lemma}\label{lem:effce}
	\hspace{-.15cm} If an allocation $x^*$ and price functions~$\{\psi_l\}_{l\in L}$ constitute a competitive equilibrium, then~$x^*=x^*(\CC)$.
\end{lemma}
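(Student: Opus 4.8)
The plan is to adapt the classical first-welfare-theorem argument to this two-stage, nonlinear-price setting. Let $(x^*,\{\psi_l\}_{l\in L})$ be a competitive equilibrium and let $y^*$ attain the inner minimization in \eqref{eq:secondcondition} at $x^*$, so that $(x^*,y^*)$ is feasible for \eqref{eq:main_model} and, by \eqref{eq:firstcondition}, $x^*\in\X=\prod_{l\in L}\X_l$. I would show that $(x^*,y^*)$ minimizes the true-cost objective $\sum_{l\in L}c_l(x_l)+d(x,y)$ over all feasible $(x,y)$ with $x\in\X$. Since $x^*(\CC)$ is by definition the (tie-broken) minimizer of exactly this problem, the claim $x^*=x^*(\CC)$ then follows.

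The key idea is that the price functions appear on both sides of the two optimality conditions and cancel by telescoping. Fix any competitor $(\tilde x,\tilde y)$ feasible for the true-cost problem, i.e. $\tilde x\in\X$, $h(\tilde x,\tilde y)=0$, $g(\tilde x,\tilde y)\le 0$. First I would invoke the bidder-side condition \eqref{eq:firstcondition}: because $x_l^*$ maximizes $\psi_l(\cdot)-c_l(\cdot)$ over $\X_l$ and $\tilde x_l\in\X_l$, we have $\psi_l(x_l^*)-c_l(x_l^*)\ge \psi_l(\tilde x_l)-c_l(\tilde x_l)$ for each $l$, and summing over $l$ gives
\[
\sum_{l\in L}c_l(\tilde x_l)-\sum_{l\in L}c_l(x_l^*)\ \ge\ \sum_{l\in L}\psi_l(\tilde x_l)-\sum_{l\in L}\psi_l(x_l^*).
\]
Next I would invoke the operator-side condition \eqref{eq:secondcondition}: since $x^*$ minimizes the inner-minimized cost and $(\tilde x,\tilde y)$ is one feasible choice for $\tilde x$, we obtain $\sum_{l\in L}\psi_l(x_l^*)+d(x^*,y^*)\le \sum_{l\in L}\psi_l(\tilde x_l)+d(\tilde x,\tilde y)$, i.e.
\[
d(x^*,y^*)-d(\tilde x,\tilde y)\ \le\ \sum_{l\in L}\psi_l(\tilde x_l)-\sum_{l\in L}\psi_l(x_l^*).
\]
Chaining the two inequalities through the common price term eliminates the $\psi_l$'s and yields $\sum_{l\in L}c_l(x_l^*)+d(x^*,y^*)\le \sum_{l\in L}c_l(\tilde x_l)+d(\tilde x,\tilde y)$, which is precisely optimality of $(x^*,y^*)$ for the true-cost problem.

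I expect the main obstacles to be bookkeeping rather than conceptual. The first point to check is admissibility and finiteness: I need $y^*$ to exist so that \eqref{eq:secondcondition} has finite value at $x^*$, and $x^*\in\X$, which comes for free from \eqref{eq:firstcondition}, so that $(x^*,y^*)$ is a legitimate feasible point to compare against. The second subtlety is the domain mismatch, namely that \eqref{eq:secondcondition} optimizes over $x\in\R_+^{t|L|}$ whereas the efficiency problem restricts $x\in\X$; this is harmless because every competitor I test already lies in $\X\subseteq\R_+^{t|L|}$, so the operator inequality still applies to it. Because the price functions are allowed to be nonlinear and bidder-dependent, I would emphasize that nothing in the telescoping uses linearity---each $\psi_l$ simply cancels between the two conditions---so the argument covers the full generality of Definition~\ref{def:compeq}. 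The only genuine caveat is multiplicity of optima: strictly, the argument shows that $(x^*,y^*)$ attains the efficient value, and the equality $x^*=x^*(\CC)$ then follows from the stated tie-breaking rule.
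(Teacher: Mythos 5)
Your proposal is correct and takes essentially the same route as the paper's proof: both chain the bidder-side condition \eqref{eq:firstcondition} with the operator-side condition \eqref{eq:secondcondition} so that the price terms $\psi_l$ telescope away, conclude that $(x^*,y^*)$ is feasible for and attains the optimal value $J(\CC)$ of the true-cost problem, and then invoke the tie-breaking rule to get $x^*=x^*(\CC)$. The only cosmetic difference is that you test an arbitrary feasible competitor $(\tilde x,\tilde y)$, whereas the paper instantiates the same two inequalities directly at $(x^*(\CC),y^*(\CC))$.
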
}
The proof is relegated to Appendix~\ref{app:lemef}.
We say that a mechanism ensures the existence of a competitive equilibrium if there exists a set of price functions $\{\psi_l\}_{l\in L}$ such that these price functions constitute a competitive equilibrium with $x^*(\CC)$, and $\psi_l(x^*_l(\CC))=p_l(\CC)$ for all $l\in L$. Under such mechanisms, we highlight that the condition in \eqref{eq:firstcondition} implies that if bidders treat their price functions to be independent of their bids, then they would bid truthfully since $x^*_l(\CC)$ maximizes their utility. 
 
 As an example, we revisit the DC-OPF problem with a single bidder at each node. Recall that $\lambda_l^*(\CC)\in\R$ is the Lagrange multiplier of the $l^{\text{th}}$ nodal balance equality constraint. If strong duality holds, then the LMP mechanism results in a competitive equilibrium with the allocation $x^*(\CC)$, and the set of price functions $\psi_l^{\text{LMP}}(x)=\lambda_l^*(\CC)\,x$ for all $l\in L$~\cite{schweppe2013spot,wu1996folk}. As a remark, in case there are several bidders on a node in a DC-OPF problem, these bidders would be supplying the same type of power supply to the grid. Notice that these bidders would also face the same LMP price function since these functions are formed by the Lagrange multipliers of the nodes. Hence, the LMP price functions are considered to be bidder-independent.

{We are now ready to prove that core-selecting mechanisms coincide with mechanisms that ensure the existence of a competitive equilibrium.\begin{theorem}\label{thm:lmpiscs}
	A mechanism is core-selecting if and only if it ensures the existence of a competitive equilibrium.
\end{theorem}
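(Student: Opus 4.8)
The plan is to establish both implications through an explicit translation between a core imputation $\bar u(\BB)$ and a profile of price functions $\{\psi_l\}_{l\in L}$. I would argue at a generic bid profile $\BB$, treating the submitted bids $b_l$ on their domains $\hat\X_l$ as the relevant cost functions; specializing to the truthful profile $\CC$ then yields the statement as phrased, while applying the equivalence at every profile matches the two mechanism-level properties. Throughout, write $x^*=x^*(\BB)$, $y^*=y^*(\BB)$, and record the identities $\bar u_l = p_l - b_l(x_l^*)$ and $\bar u_0 = -\sum_{l\in L}p_l - d(x^*,y^*)$, together with the efficiency relation $\sum_{l\in L}b_l(x_l^*)+d(x^*,y^*)=J(\BB)$, which already forces the equality constraint $\bar u_0 + \sum_{l\in L}\bar u_l = -J(\BB)$ in \eqref{eq:coref}.

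For the direction core-selecting $\Rightarrow$ competitive equilibrium, given $\bar u(\BB)\in Core(\BB)$ I would define
$$\psi_l(x_l) = b_l(x_l) + \bar u_l \ \ \text{for } x_l\in\hat\X_l\setminus\{0\}, \qquad \psi_l(0)=0,$$
extended by $+\infty$ off $\hat\X_l$ so that procuring outside the admissible set is never optimal; this gives $\psi_l(0)=0$ and $\psi_l(x_l^*)=p_l$. Condition \eqref{eq:firstcondition} is immediate: $\psi_l(x_l)-b_l(x_l)$ equals $\bar u_l$ at every nonzero admissible $x_l$ and equals $0$ at $x_l=0$, so nonnegativity of $\bar u_l$ makes $x_l^*$ a maximizer. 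The substance is \eqref{eq:secondcondition}. For arbitrary $x\in\R_+^{t|L|}$ with support $S=\{l:x_l\neq 0\}$, the inner value equals $\big(\sum_{l\in S}b_l(x_l)+\min_y d(x,y)\big)+\sum_{l\in S}\bar u_l$; since $x$ is feasible for the $S$-restricted problem the bracket is at least $J(\BB_S)$, while the value attained at $x^*$ is $J(\BB)+\sum_{l\in L}\bar u_l$. It then remains to show $J(\BB_S)-J(\BB)\ge\sum_{l\notin S}\bar u_l$, which is precisely the core inequality for $S$ after substituting $\bar u_0=-J(\BB)-\sum_{l\in L}\bar u_l$.

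For the converse, suppose $\{\psi_l\}$ and $x^*$ form a competitive equilibrium with $\psi_l(x_l^*)=p_l$. Efficiency (Lemma~\ref{lem:effce}) identifies $x^*$ with the optimal allocation, giving the core equality as above, and evaluating \eqref{eq:firstcondition} at $x_l=0$ yields $\bar u_l=\psi_l(x_l^*)-b_l(x_l^*)\ge 0$. For each $S\subset L$ I would feed the $S$-restricted optimizer $x^{(S)}$ (with $x^{(S)}_{-S}=0$, achieving $J(\BB_S)$) into \eqref{eq:secondcondition}; using \eqref{eq:firstcondition} in the form $\psi_l(x^{(S)}_l)\le b_l(x^{(S)}_l)+\bar u_l$ together with optimality of $x^*$ gives $\sum_{l\in L}p_l + d(x^*,y^*) \le J(\BB_S)+\sum_{l\in S}\bar u_l$, i.e. $\bar u_0+\sum_{l\in S}\bar u_l\ge -J(\BB_S)$, the remaining core inequalities.

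I expect the main conceptual step to be recognizing the exact correspondence between the coalitional core inequalities, indexed by $S\subset L$, and the operator's deviation allocations in \eqref{eq:secondcondition}, indexed by their support $S$: the surplus $\sum_{l\in S}\bar u_l$ a coalition can withhold is exactly the penalty the operator incurs when restricting procurement to $S$. The main technical obstacle is the mismatch between the unconstrained domain $\R_+^{t|L|}$ in \eqref{eq:secondcondition} and the bid domains $\hat\X_l$ appearing in \eqref{eq:main_model} and in $J(\BB_S)$, which I would resolve via the $+\infty$ extension of $\psi_l$ off $\hat\X_l$; one must also justify that the inner minimization over $y$ at $x=x^*$ is attained at $y^*$ and that non-accepted bidders ($x_l^*=0$) satisfy $\bar u_l=p_l=0$, so that the behavior of $\psi_l$ at the origin stays consistent with $\psi_l(0)=0$.
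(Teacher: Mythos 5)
Your proposal is correct and takes essentially the same route as the paper's proof: the same price-function construction $\psi_l(x_l)=b_l(x_l)+\bar u_l$ (zero at the origin, $+\infty$ off the bid domain) for the core-to-equilibrium direction, and for the converse the same correspondence between the core inequality for a coalition $S$ and the operator's deviation in \eqref{eq:secondcondition} to allocations supported on $S$, with the paper merely arguing condition \eqref{eq:secondcondition} by contradiction and routing one inequality chain through an auxiliary restricted problem where you substitute the $J(\BB_S)$-optimizer directly. The side conditions you flag ($\bar u_l=0$ for non-accepted bidders, attainment of the inner minimum at $y^*$) are handled in the paper by the same observations, so nothing is missing.
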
}

The proof is relegated to Appendix~\ref{app:thm1}. As a corollary, the LMP mechanism is core-selecting, and hence it is also coalition-proof. Coalition-proofness can be another motivation for using LMP in DC-OPF problems.\footnote{As a side note, using Theorem~\ref{thm:lmpiscs}, we can prove that the LMP payments are upper bounded by the VCG payments, see Appendix~\ref{app:lmpvcg}} 

Theorem~\ref{thm:lmpiscs} shows that core-selecting mechanisms generalize the economic rationale of the LMP prices to core prices. The latter price can be nonlinear and bidder-dependent. 
As mentioned earlier, Lagrange multipliers may not constitute a competitive equilibrium since strong duality may not hold for the general class of electricity markets in~\eqref{eq:main_model}. For instance, deriving a meaningful payment mechanism to accompany nonconvex AC-OPF dispatch is an open problem~\cite{bosesome}. As core-selecting mechanisms exist even under nonconvex bids and nonconvex constraint sets, they are viable payment mechanisms for any market modeled by~\eqref{eq:main_model}.

Nevertheless, the VCG mechanism is the unique DSIC mechanism calculating the optimal allocation rule~\cite{green1979incentives,krishna1998}. Since the VCG mechanism may not be core-selecting, the DSIC property is in general violated under core-selecting mechanisms, and unilateral deviations can be profitable. Furthermore, competitive equilibrium theory relies on bidders treating their price functions as independent from their bids. This assumption does not take into account the full set of strategic behaviors. 
In the next section, we address the design of coalition-proof mechanisms to approximate DSIC without the price-taking assumption. We then address the case where the demand-side also submits bids.

\subsection{Designing coalition-proof mechanisms}\label{sec:3b}

\subsubsection{Approximating DSIC}\label{sec:3b1}
In this section, we characterize the class of core-selecting mechanisms that approximates DSIC by minimizing the sum of potential profits of each bidder from a unilateral deviation. Invoking the claim in~\cite{parkes2002achieving}, approximating DSIC provides us also with a meaningful method to achieve an approximate efficiency property.

First, we quantify the violation of the DSIC property under any core-selecting mechanism.\begin{lemma}\label{lem:lie}
	Under any core-selecting mechanism, the maximum profit of bidder~$l$ by a unilateral deviation from its truthful bid is $\bar u^{\text{VCG}}_l(\CC_l, \BB_{-l})-\bar u_l({\CC_l, \BB_{-l}})$, where $\bar u^{\text{VCG}}_l(\CC_l, \BB_{-l})=J(\BB_{-l})-J(\CC_l, \BB_{-l})$. 
\end{lemma}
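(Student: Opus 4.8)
The goal is to show that, fixing an arbitrary strategic profile $\BB_{-l}$ for the other bidders, the largest revealed profit bidder $l$ can extract by deviating from its truthful bid $\CC_l$ to any admissible bid $b_l$ is exactly the gap $\bar u^{\text{VCG}}_l(\CC_l,\BB_{-l})-\bar u_l(\CC_l,\BB_{-l})$, where the VCG revealed utility is $J(\BB_{-l})-J(\CC_l,\BB_{-l})$. The plan is to work entirely with revealed utilities and to bound them using the core membership constraints together with the truthful value of bidder $l$. Because the payment rule has the form $p_l(\BB)=b_l(x^*_l(\BB))+\bar u_l(\BB)$, when bidder $l$ deviates to $b_l$ its \emph{true} profit is $p_l(b_l,\BB_{-l})-c_l(x^*_l(b_l,\BB_{-l}))=\bar u_l(b_l,\BB_{-l})+b_l(x^*_l)-c_l(x^*_l)$, so I first need to relate this true profit to the revealed utility $\bar u_l$ evaluated at the truthful report.

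First I would write the profit from deviation explicitly and isolate the quantity we can control. Under a core-selecting mechanism the revealed utility vector after deviation lies in $Core(b_l,\BB_{-l})$, so in particular the coalitional constraint for the set $S=L\setminus l$ gives $\bar u_0+\sum_{k\neq l}\bar u_k\geq -J(\BB_{-l})$, while the efficiency equality gives $\bar u_0+\sum_{k\in L}\bar u_k=-J(b_l,\BB_{-l})$. Subtracting the former from the latter yields the key upper bound $\bar u_l(b_l,\BB_{-l})\leq J(\BB_{-l})-J(b_l,\BB_{-l})$, i.e.\ the revealed utility of any bidder is bounded above by its VCG revealed utility computed at the \emph{submitted} profile. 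This is the core-to-VCG domination inequality, and it is the crux of the argument. The hard part will be converting this bound on the revealed utility at the reported allocation into a bound on the \emph{true} profit and then showing the supremum over all deviations $b_l$ collapses to the stated expression.

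Next I would argue that the true profit is maximized precisely when $b_l$ is chosen so that the deviation's allocation and induced dispatch match what bidder $l$ could achieve by reporting a bid that mimics its true cost on the winning bundle; the standard trick is to show that for the optimal deviation the term $b_l(x^*_l)$ can be driven down so that the allocation coincides with the truthful allocation while the revealed utility saturates its VCG upper bound. Concretely, I would verify that reporting the truthful bid already achieves allocation $x^*(\CC)=x^*(\CC_l,\BB_{-l})$ and that under any deviation the attainable true profit is at most $J(\BB_{-l})-J(\CC_l,\BB_{-l})-\bigl(c_l(x^*_l(\CC_l,\BB_{-l}))-\text{(truthful bid value)}\bigr)$, using $b_l(0)=0$ and the nonincreasing monotonicity of $J$ to rule out gains from claiming a spurious winning set. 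Since at the truthful report the revealed and true utilities coincide (the bid equals the cost), the achievable profit difference reduces cleanly to $\bigl(J(\BB_{-l})-J(\CC_l,\BB_{-l})\bigr)-\bar u_l(\CC_l,\BB_{-l})=\bar u^{\text{VCG}}_l(\CC_l,\BB_{-l})-\bar u_l(\CC_l,\BB_{-l})$.

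Finally, to establish that this bound is tight rather than merely an upper bound, I would exhibit a deviation attaining it: the bidder reports a bid that is essentially its true cost shifted so as to retain the efficient allocation $x^*(\CC)$ while forcing the mechanism to pay out the full VCG revealed utility, which is admissible because VCG utilities themselves satisfy the core's IR and coalitional lower bounds at the allocation $x^*(\CC)$. The main obstacle I anticipate is the tightness direction: I must confirm that no core-selecting mechanism can pay bidder $l$ \emph{less} than what this particular deviation forces, which requires checking that the constructed $b_l$ makes the efficiency equality and the $S=L\setminus l$ inequality simultaneously active, so that $\bar u_l$ is pinned at its VCG value. Once both the upper bound and the matching construction are in place, subtracting the truthful revealed utility $\bar u_l(\CC_l,\BB_{-l})$ gives exactly the claimed maximum profit, completing the proof.
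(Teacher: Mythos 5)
Your upper-bound argument is sound and, in one respect, more self-contained than the paper's: you derive the domination inequality $\bar u_l(b_l,\BB_{-l})\le J(\BB_{-l})-J(b_l,\BB_{-l})$ directly from the $S=L\setminus\{l\}$ core constraint and the efficiency equality of $Core(b_l,\BB_{-l})$, where the paper simply cites \cite{orcun2018game}. But the step converting this revealed-utility bound into a bound on the \emph{true} profit is justified with the wrong tools: neither $b_l(0)=0$ nor the monotonicity of $J$ delivers it, and your expression ``$J(\BB_{-l})-J(\CC_l,\BB_{-l})-\bigl(c_l(x^*_l(\CC_l,\BB_{-l}))-\text{truthful bid value}\bigr)$'' is circular, since at the truthful report the bid value \emph{equals} that cost. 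What you actually need is
\begin{equation*}
b_l(x^*_l(b_l,\BB_{-l}))-c_l(x^*_l(b_l,\BB_{-l}))\;\le\; J(b_l,\BB_{-l})-J(\CC_l,\BB_{-l}),
\end{equation*}
which holds because the optimizer of $J(b_l,\BB_{-l})$ is feasible for the problem defining $J(\CC_l,\BB_{-l})$; equivalently, as the paper does, note that your combined bound is exactly the VCG utility of reporting $b_l$ against $\BB_{-l}$, so the claim is precisely the DSIC property of the VCG mechanism. This is a one-line repair, but as written the upper bound is not closed.

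The genuine error is in the tightness direction. You justify the attaining deviation by asserting that VCG utilities ``satisfy the core's IR and coalitional lower bounds at the allocation $x^*(\CC)$,'' i.e., that the VCG utility vector lies in the core. That is false in general---and it contradicts the premise of the paper: if VCG utilities always lay in the core, the VCG mechanism would be core-selecting and coalition-proofness would not be an issue. Moreover, no core-selecting mechanism can be ``forced'' to pay bidder $l$ the truthful VCG revealed utility after the deviation: your own domination inequality, applied at the deviation profile, caps $\bar u_l(\hat{\BB}_l,\BB_{-l})$ by $J(\BB_{-l})-J(\hat{\BB}_l,\BB_{-l})$, which the correct construction drives down to at most $\epsilon$. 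The paper instead inflates the bid itself: $\hat b_l(x)=c_l(x)+\bar u^{\text{VCG}}_l(\CC_l,\BB_{-l})-\epsilon$ for all $x\in\X_l\setminus\{0\}$, and $0$ at $0$. Then $J(\hat{\BB}_l,\BB_{-l})=J(\BB_{-l})-\epsilon<J(\BB_{-l})$, so bidder $l$ still wins a nonzero quantity; since the markup $\hat b_l-c_l$ is constant on every nonzero allocation, the true profit is $\bar u^{\text{VCG}}_l(\CC_l,\BB_{-l})-\epsilon+\bar u_l(\hat{\BB}_l,\BB_{-l})$ with $0\le\bar u_l(\hat{\BB}_l,\BB_{-l})\le\epsilon$, uniformly over \emph{whichever} core point the mechanism selects. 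The surplus is extracted through the bid, with the post-deviation core squeezed to nearly zero---not by making the efficiency and $S=L\setminus\{l\}$ constraints active at the truthful VCG value, which is impossible exactly in the cases the lemma is about. (Your insistence on retaining the allocation $x^*(\CC)$ is also neither guaranteed nor needed; only a nonzero post-deviation allocation matters, which follows from optimality of the dispatch since the bid shift is strictly below bidder $l$'s marginal contribution.)
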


The proof is relegated to Appendix~\ref{app:lemlie}. This lemma provides us with a measure for the loss of incentive-compatibility under any core-selecting mechanism.
Note that calculating the optimal deviation given in the proof of Lemma~\ref{lem:lie} requires full information of the bid profile. Moreover, attempting this optimal deviation involves a risk since bidding any amount higher would result in zero allocation and zero utility.

Next, we design an approximately DSIC core-selecting mechanism.
A mechanism is said to be \textit{maximum payment core-selecting} (MPCS) if its revealed utilities are given by,
\begin{equation}\label{eq:mpcs}
	\bar u^{\text{MPCS}}(\BB) = \argmax_{u \in \text{Core}(\BB)}\ \sum_{l\in L} u_l - \epsilon\, \norm{u_l-\bar u_l^{\text{VCG}}(\BB)}_2^2\,, 
\end{equation}
where $\epsilon$ is a small positive number. The second term in the objective of the problem~\eqref{eq:mpcs} is used as a tie-breaker. This term ensures that the optimizer is unique by picking the utilities that are nearest to the VCG utilities. See Figure~\ref{fig:core_picture} for an illustration of the revealed utilities for two bidders under different mechanisms.\footnote{By definition, under the MPCS mechanism sum of the utilities of the bidders are higher than the one under the LMP mechanism. However, we underline that every bidder may not receive an MPCS utility higher than its LMP utility. In the numerics, we provide one such instance.} We highlight that problem \eqref{eq:mpcs} is a convex quadratic program since the core is given by a set of linear equality and inequality constraints in \eqref{eq:coref}. In the numerics, we discuss its computational aspects.  

\begin{figure}[t]
	\begin{center}
		\begin{tikzpicture}[scale=0.6, every node/.style={scale=0.55}]
		\coordinate (r0) at (0,6);
		\coordinate (s0) at (6,0);
		\coordinate (si) at (0,0);
		\coordinate (s1) at (0,6);
		\draw[->] (0,0) -- (7,0) node[right] {\huge $\bar u_1$};
		\draw[->] (0,0) -- (0,6.75) node[above] {\huge $\bar u_2$};
		\filldraw[draw=black!80!blue,line width=.22mm, fill=gray!20] (r0) -- (s0) -- (si) -- (s1) -- cycle;
		\draw[scale=1, line width=.22mm, domain=0:6,smooth,variable=\x,black!80!blue] plot ({\x},{6-\x}) node[below] at(-2.75,1.5){\huge $\text{Core}$};
		\draw[scale=1, line width=.22mm, domain=0:6,dashed,variable=\x,black!40!red]  plot ({\x},{6}) node[below] at(8.4,6.9){\huge $(\bar u_1^{\text{VCG}},\,\bar u_2^{\text{VCG}})$};
		\draw[scale=1, line width=.22mm, domain=0:6,dashed,variable=\y,black!40!red]  plot ({6},{\y});
		\node[black!40!red] at (6,5.95) {\Huge\textbullet};
		\node[black!60!blue] at (3,2.95) {\Huge\textbullet};
		\node[black!] at (3.3,2.97) {\footnotesize\textbullet};
		\node[black!60!green] at (1.2,3.85) {\Huge\textbullet};
		\draw[scale=1, line width=.22mm, domain=0:3,smooth,variable=\x,black!60!blue] plot ({6-\x},{6-\x});
		\draw[scale=1, line width=.22mm, domain=0:0.312,smooth,variable=\x,black!60!blue] plot ({3.3+\x},{3.3-\x});
		\draw[scale=1, line width=.22mm, domain=-0.012:0.3,smooth,variable=\x,black!60!blue] plot ({3.6-\x},{3.0-\x});
		\draw[scale=1, line width=.22mm, domain=0:3,dashed,variable=\x,black!60!blue]  plot ({\x},{3});
		\draw[scale=1, line width=.22mm, domain=0:3,dashed,variable=\y,black!60!blue]  plot ({3},{\y});
		\draw[scale=1, line width=.22mm, domain=0:1.2,dashed,variable=\x,black!60!green]  plot ({\x},{3.9});
		\draw[scale=1, line width=.22mm, domain=0:3.9,dashed,variable=\y,black!60!green]  plot ({1.2},{\y});
		\node[black!40!red] at (-0.9,6) {\huge $\bar u_2^{\text{VCG}}$};
		\node[black!60!blue] at (-0.9,2.9) {\huge $\bar u_2^{\text{MPCS}}$};
		\node[black!60!green] at (-0.9,4.0) {\huge $\bar u_2^{\text{LMP}}$};
		\node[black!40!red] at (6,-.6) {\huge $\bar u_1^{\text{VCG}}$};
		\node[black!60!blue] at (3,-.6) {\huge $\bar u_1^{\text{MPCS}}$};
		\node[black!60!green] at (1.2,-.6) {\huge $\bar u_1^{\text{LMP}}$};
		\node[black] at (-.75,-.5) {\huge $(0,0)$};
		\draw[->,line width=.22mm] (-1.7,1.1) -- (2.1,1.1) ;
		\end{tikzpicture}
		\caption{Illustration of the revealed utilities under different mechanisms}\label{fig:core_picture}
	\end{center}
\vspace{.2cm}
\end{figure}
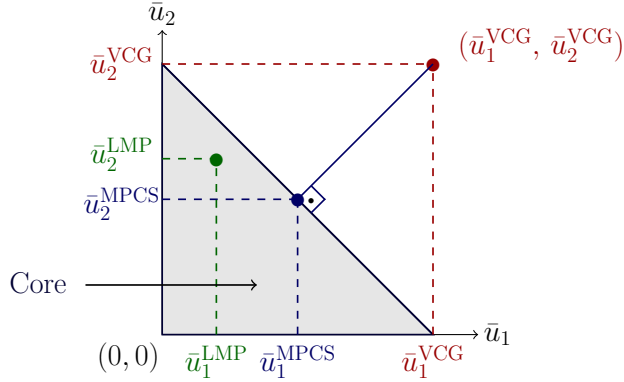

The following theorem shows that the MPCS mechanism approximates the DSIC property.

\begin{theorem}\label{thm:mpcsmax}
	The MPCS mechanism in \eqref{eq:mpcs} minimizes the sum of maximum profits of each bidder by a unilateral deviation from their truthful bids among all other core-selecting mechanisms.
\end{theorem}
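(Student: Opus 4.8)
The plan is to evaluate the deviation bound of Lemma~\ref{lem:lie} at the fully truthful profile and to recognize that the resulting objective collapses to the leading objective of~\eqref{eq:mpcs}. First I would fix an arbitrary core-selecting mechanism, so that its revealed utilities at the truthful profile satisfy $\bar u(\CC)\in\text{Core}(\CC)$. Specializing Lemma~\ref{lem:lie} to $\BB_{-l}=\CC_{-l}$, the maximum profit of bidder~$l$ from a unilateral deviation away from $\CC_l$ equals $\bar u^{\text{VCG}}_l(\CC)-\bar u_l(\CC)$, with $\bar u^{\text{VCG}}_l(\CC)=J(\CC_{-l})-J(\CC)$. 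Summing over $l\in L$, the quantity to be minimized is
\[
\sum_{l\in L}\big(\bar u^{\text{VCG}}_l(\CC)-\bar u_l(\CC)\big)
=\sum_{l\in L}\bar u^{\text{VCG}}_l(\CC)-\sum_{l\in L}\bar u_l(\CC).
\]

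Second, I would observe that $\sum_{l\in L}\bar u^{\text{VCG}}_l(\CC)$ depends only on the truthful bid profile, through the values $J(\CC_{-l})$ and $J(\CC)$, and is therefore a fixed constant common to every core-selecting mechanism. Hence minimizing the sum of maximum unilateral-deviation profits over all core-selecting mechanisms is equivalent to maximizing $\sum_{l\in L}\bar u_l(\CC)$ subject to $\bar u(\CC)\in\text{Core}(\CC)$. This is precisely the leading term in the definition~\eqref{eq:mpcs} of the MPCS mechanism, so the theorem reduces to verifying that the MPCS optimizer indeed attains $\max_{u\in\text{Core}(\CC)}\sum_{l\in L}u_l$. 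I would also record here that, after substituting the equality constraint $\bar u_0=-J(\CC)-\sum_{l\in L}\bar u_l$ into the core inequalities, the feasible set is a bounded polytope in the bidder utilities, so this maximum is attained on a nonempty face.

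The one delicate point, which I expect to be the main obstacle, is the quadratic tie-breaker $-\epsilon\sum_{l\in L}\norm{u_l-\bar u^{\text{VCG}}_l(\CC)}_2^2$: a priori a strictly concave perturbation could pull the maximizer off the face on which $\sum_{l\in L}u_l$ is largest, so that MPCS would fall short of the maximum. I would resolve this by a standard regularized-linear-program argument. Completing the square rewrites the MPCS objective as the Euclidean projection of $\bar u^{\text{VCG}}(\CC)+\tfrac{1}{2\epsilon}\mathbf 1$ onto $\text{Core}(\CC)$, where $\mathbf 1$ is the cost vector of $\sum_{l\in L}u_l$. Since the maximizers of this linear functional form a face $F$ of the polytope and $\mathbf 1$ lies in the relative interior of the normal cone of $F$, for $\epsilon$ small enough the projected point lands in $F$; consequently $\sum_{l\in L}\bar u^{\text{MPCS}}_l(\CC)=\max_{u\in\text{Core}(\CC)}\sum_{l\in L}u_l$, while the perturbation only selects the unique point of $F$ closest to the VCG utilities. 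Combining this with the reduction above shows that the MPCS mechanism achieves the minimal sum of maximum deviation profits among all core-selecting mechanisms, which is the claim.
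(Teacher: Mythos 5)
Your proof is correct, and its main reduction is exactly the paper's: specialize Lemma~\ref{lem:lie} to $\BB_{-l}=\CC_{-l}$, observe that $\sum_{l\in L}\bar u^{\text{VCG}}_l(\CC)$ is a mechanism-independent constant, and conclude that minimizing the total maximum deviation profit over core-selecting mechanisms is equivalent to maximizing $\sum_{l\in L}\bar u_l(\CC)$ over $\text{Core}(\CC)$, which is the leading term of \eqref{eq:mpcs}. Where you go beyond the paper is the tie-breaker: the paper's proof simply drops the quadratic term as ``required only for tie-breaking purposes,'' whereas you justify this step via an exact-regularization argument, rewriting the MPCS optimizer as the Euclidean projection of $\bar u^{\text{VCG}}(\CC)+\tfrac{1}{2\epsilon}\mathbf{1}$ onto the core polytope (which is indeed bounded, since each $u_l\le \bar u_l^{\text{VCG}}(\CC)$ by the coalition constraint for $S=L\setminus\{l\}$) and showing that for $\epsilon$ small enough the projection lands on the face $F$ maximizing $\sum_{l\in L}u_l$, at the point of $F$ nearest the VCG utilities. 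This is a genuine strengthening rather than a different route: it makes explicit that the theorem, read literally for a fixed $\epsilon$, requires $\epsilon$ sufficiently small (consistent with the paper's stipulation that $\epsilon$ is ``a small positive number''), and it simultaneously recovers, with proof, the paper's informal claim that the perturbation merely selects the unique core point closest to the VCG utilities among the maximizers.
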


The proof is relegated to Appendix~\ref{app:maxmpc}.
Under the MPCS mechanism, the total incentives to deviate from truthful bidding are minimal, and hence it is approximately DSIC.
As a remark, when the VCG utilities lie in the core, they constitute the optimizer to the problem \eqref{eq:mpcs}. This follows since $\bar u_l^{\text{VCG}}({{\BB}})=\max\{\bar u_l\ \rvert\ \bar u\in Core({{\BB}})\}$~\cite{orcun2018game}. Hence, for such instances, the MPCS mechanism is equivalent to the VCG mechanism. The LMP mechanism, however, does not have this property.

The MPCS mechanism does not rely on the price-taking assumption to approximate DSIC. In return, it generally yields nonlinear and bidder-dependent payments. Nonlinearity might be regarded as a big shift for some existing markets~\cite{bosesome}. Moreover, the bidders could find it hard to accept bidder-dependency, or this property might even be precluded by law~\cite{cramton2003electricity}. Nevertheless, the MPCS mechanism can still provide an elegant economic rationale when meaningful linear prices do not exist. The coalition-proofness property of the MPCS mechanism discourages bidders from entering the market with multiple identities to try and exploit the bidder-dependency.

\subsubsection{Market design considerations in exchanges}\label{sec:4}

In this section, we extend our model to exchange markets. 
We assume that each bidder~$l$ has a private true cost~function $c_l: \X_l \rightarrow \mathbb R$, where $0\in \X_l\subseteq\R^t$ and $c_l(0)=0$. Furthermore, the bid function is denoted by $b_l:\hat \X_l \rightarrow \mathbb R$, where $0\in \hat \X_l\subseteq\R^t$  and $b_l(0)=0$.  Note that the domains of these functions are now relaxed to $\R^t$. As a remark, an exchange market is more general than a two-sided market since these bid functions can also represent a bidder interested in buying and selling different types of supplies simultaneously.
The remaining definitions for the one-sided auctions naturally extend to exchanges. 
Moreover, results on IR, DSIC, coalition-proofness, and competitive equilibrium further hold in exchanges when we relax the power supply domains from $\R_+^t$ to $\R^t$.
For instance, the core is again defined by~\eqref{eq:coref}.\footnote{In the combinatorial exchange literature, the core is usually defined by intersecting \eqref{eq:coref} with $\bar u_0=0$~\cite{hoffman2010practical,day2013division, bichler2017core,milgrom2007package} However, this new core is in general empty\cite{hoffman2010practical}. Furthermore, such core definitions implicitly assume that a transaction can occur even without the involvement of the central operator. This is not true for the electricity markets since the dispatch has to be secure.} Then, the proof of \cite[Theorem~6]{karaca2017game} applies to exchanges, proving the connection between coalition-proofness and core-selecting.

In addition to the properties we studied so far, an exchange requires that the operator obtains a revenue, adequate to cover its total payment to balance its budget. In one-sided markets, this might be less of an issue since the demand-side is assumed to be inelastic to the price changes. We say that a mechanism is \textit{budget-balanced} if the operator has a nonnegative utility under any bid profile~$\BB$, $u_0(\BB)\geq0$. We say that it is \textit{strongly budget-balanced} if this utility is exactly zero, $u_0(\BB)=0$. 
Under DC-OPF exchange problems, the LMP mechanism is budget-balanced~\cite[Fact~4]{wu1996folk}. 
On the other hand, the VCG mechanism is not always budget-balanced.\footnote{In Appendix~\ref{app:deficvcg}, we characterize the instances in which the VCG mechanism has a deficit. In Appendix~\ref{app:coalpvcg}, we also prove that the VCG mechanism is not coalition-proof in exchanges.} This follows from the Myerson-Satterthwaite impossibility theorem, which shows that no exchange can always be efficient, DSIC, budget-balanced, and IR simultaneously~\cite{myerson1983efficient}. Fortunately, under the core selecting mechanisms, we can guarantee budget-balance.

\begin{theorem}\label{thm:corewb}
	Any core-selecting mechanism is~budget-balanced.
\end{theorem}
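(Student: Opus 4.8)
The plan is to read off budget-balance directly from the coalitional inequalities defining the core, using only the empty-coalition constraint. First I would recall that the operator's utility and revealed utility coincide, $u_0(\BB)=\bar u_0(\BB)$, so that proving $\bar u_0(\BB)\ge 0$ for every $\BB$ suffices. Since the mechanism is core-selecting, $\bar u(\BB)\in Core(\BB)$, and I would instantiate the defining inequality $\bar u_0+\sum_{l\in S}\bar u_l\ge -J(\BB_S)$ at $S=\emptyset$. Because the empty sum vanishes, this yields the clean bound $\bar u_0(\BB)\ge -J(\BB_\emptyset)$, reducing the whole theorem to the single claim that the no-participation problem satisfies $J(\BB_\emptyset)\le 0$.

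Next I would analyze $J(\BB_\emptyset)$. When $S=\emptyset$ the constraint $x_{-S}=0$ forces $x=0$, so that $J(\BB_\emptyset)=\min_{y}\{\,d(0,y) : h(0,y)=0,\ g(0,y)\le 0\,\}$, i.e. the minimal second-stage cost of the null dispatch in which no bid is accepted. The content of this step is to exhibit a feasible null operating point whose cost is nonpositive: the do-nothing outcome $(x,y)=(0,0)$ should be feasible with $d(0,0)=0$, which is the natural no-trade benchmark of an exchange and is consistent with the standing normalizations $0\in\hat\X_l$ and $b_l(0)=0$. This gives $J(\BB_\emptyset)\le d(0,0)\le 0$, and combining with the previous paragraph yields $u_0(\BB)=\bar u_0(\BB)\ge -J(\BB_\emptyset)\ge 0$, which is exactly budget-balance.

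I expect the main obstacle to lie precisely in this structural step rather than in the coalitional bookkeeping: one must argue that the empty market carries no unavoidable second-stage cost. This is immediate when $d$ vanishes at the origin or when the feasible set contains a costless idle point, but it is the one place where an assumption on $d$ (or on feasibility of the null dispatch) is genuinely needed. It is worth noting that, by the monotonicity $J(\BB_R)\le J(\BB_S)$ for $S\subseteq R$ recorded after~\eqref{eq:coref}, the quantity $J(\BB_\emptyset)$ is the largest among all $J(\BB_S)$, so the $S=\emptyset$ inequality is in fact the weakest of the core's lower bounds on $\bar u_0$; the fact that even this weakest constraint already forces $\bar u_0\ge 0$ is what makes budget-balance hold at every core point, and hence under every core-selecting mechanism. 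Finally, the same argument transfers verbatim to the exchange setting with $\hat\X_l\subseteq\R^t$, since neither the core definition~\eqref{eq:coref} nor the identity $u_0=\bar u_0$ changes there.
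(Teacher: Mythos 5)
Your proof is correct, but it takes a genuinely different route from the paper's. You use only the empty-coalition core constraint, $\bar u_0 \geq -J(\BB_\emptyset)$, and close the argument with the structural claim $J(\BB_\emptyset)\leq 0$ (feasibility of a costless null dispatch). The paper instead reformulates the core inequalities via the equality constraint into $\sum_{l\in K}\bar u_l \leq J(\BB_{-K})-J(\BB)$, instantiates $K=W$ (the winning bidders), invokes the convention that losers receive zero utility, and crucially \emph{assumes} mid-proof that the pay-as-bid mechanism is budget-balanced, i.e.\ $-J(\BB')\geq 0$ under any bid profile $\BB'$, which supplies the needed $J(\BB_{-W})\leq 0$. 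So both arguments rest on an auxiliary hypothesis absent from the bare theorem statement, and you correctly isolated where it enters: by the monotonicity $J(\BB_R)\leq J(\BB_S)$ for $S\subseteq R$, your single bid-independent condition $J(\BB_\emptyset)\leq 0$ implies the paper's assumption for every subprofile, and it is exactly what fails in one-sided procurement ($J(\BB_\emptyset)=\infty$ since demand must be met), consistent with budget-balance holding only in exchanges. Two caveats. First, your condition is slightly \emph{stronger} than what the paper actually uses: the paper only needs $J(\BB_{-W})\leq 0$, which can hold even when the null dispatch is infeasible (losing bidders alone might clear the market at nonpositive cost), in which case your $S=\emptyset$ bound is vacuous at $-\infty$ while the paper's argument still goes through; your proof buys simplicity and a cleaner, primitives-only hypothesis at the cost of this marginal generality. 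Second, your aside that the $S=\emptyset$ inequality is ``the weakest of the core's lower bounds on $\bar u_0$'' is not justified as stated: the constraint for general $S$ bounds $\bar u_0+\sum_{l\in S}\bar u_l$, and after moving $\sum_{l\in S}\bar u_l\geq 0$ to the right-hand side the comparison with $-J(\BB_\emptyset)$ can go either way; this remark is not load-bearing, so the proof stands without it.
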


The proof is relegated to Appendix~\ref{app:bb}.
It follows that the MPCS mechanism is budget-balanced in addition to the properties discussed in Section~\ref{sec:3b1}. Furthermore, this also provides an alternative proof to the budget-balance of LMP.

We highlight that the LMP mechanism provides methods to reallocate its budget surplus through financial transmission and flowgate congestion rights\cite{hogan1992contract,wu1996folk}. These rights are important tools to provide market signals to incentivize investment in transmission capacity. In order to compensate the owners of these rights in the MPCS mechanism, a plausible solution is to include them as an additional constraint to \eqref{eq:mpcs}, that is, $\sum_{l \in W } \bar u_l\leq -J(\mathcal{B})-\Delta_{\text{r}}$ where $\Delta_{\text{r}}\geq 0$ is the total rights to be paid. Defining ways to incorporate these rights, and providing the correct investment signals for transmission capacity expansion, is part of our ongoing work.

\section{Numerical Results}\label{sec:5}

Our goal is to compare the effectiveness of the mechanisms we have discussed, based on electricity market examples. First, we consider the AC-OPF problem from \cite{bukhsh2013local}. This problem yields a nonzero duality gap, and hence the Lagrange multipliers cannot be guaranteed to have an economic rationale. As an alternative, we show that the MPCS mechanism coincides with the VCG mechanism, and hence it is both coalition-proof and DSIC. We then study the two-stage Swiss reserve procurement auction from~\cite{abbaspourtorbati2016swiss} which also fails to attain strong duality. For this example, the VCG mechanism is not core-selecting, and hence it is not coalition-proof. Instead, the MPCS mechanism yields a coalition-proof outcome. Finally, we consider a two-sided market with DC-OPF constraints to compare the budget-balance under the pay-as-bid, LMP, MPCS, and VCG mechanisms. We show that we can compute the MPCS mechanism in a reasonable time. All problems are solved on a computer equipped with 32 GB RAM and a 4.0 GHz quad-core Intel i7 processor. 

\subsection{AC-OPF problem with a duality gap}

The following simulation is based on a 5-bus network model given in~\cite{bukhsh2013local}. Apart from having additional generators, the network model is the same as the model in~\cite{bukhsh2013local}. We provide the generator costs for active power in Table~\ref{table:actable}. For this problem, strong duality does not hold. We verified this by showing that the semidefinite programming relaxation is not tight for this polynomial optimization problem~\cite{molzahn2014moment}. Consequently, Lagrange multipliers may not be meaningful in an economic sense. We highlight that we can solve this problem to global optimality via the second level of moment relaxations (sum-of-squares hierarchy)~\cite{molzahn2014moment}. 

Payments under the pay-as-bid and the MPCS mechanisms are provided in Table~\ref{table:actable}. Note that the pay-as-bid mechanism would actually not lead to truthful behavior. It is provided for comparison since the LMP mechanism is not applicable. For this example, even though the constraints are not polymatroid-type, the MPCS mechanism happens to coincide with the VCG mechanism, attaining the DSIC property. Moreover, we can ensure that losing bidders 3 and 4 cannot profit from collusion, and no bidder can profit from bidding with multiple identities.

\begin{table}[t]\caption{Generator data for 5-bus AC-OPF problem}\label{table:actable}

	\begin{center}
	\begin{tabular}{|l||l||l||l||l||l|}
		\hline
		Gen. & Node& Cost & $x_i^*$ MW & Pay-as-bid & MPCS\\
		\hline 
		1& 1  & $.1x_1^2+4x_1$ & 246.0 & \$7038.0& \$12772.3\\
		\hline 
		2 & 5  & $.1x_2^2+1x_2$ & 98.2 & \$1061.5 & \$2435.6 \\
		\hline 
		3 & 1  & $.1x_3^2+30x_3$ & 0 & 0 & 0 \\
		\hline 
		4 & 5  & $.1x_4^2+15x_4$ & 0 & 0 & 0\\
		\hline
	\end{tabular}
	\end{center}
\end{table}

As a remark, for this problem, we can show that there is no linear price function that would yield a competitive equilibrium. Since the bids are convex, the condition in \eqref{eq:firstcondition} requires assigning bidder~1 a linear price equal to its marginal cost at $246$MW. This price is given by $\$53.2/\text{MW}$, yielding the payment $\$13087.2$. Under this payment mechanism, utility of the bidder~1 cannot be in the core since this utility is greater than its VCG utility. This concludes the nonexistence of a competitive equilibrium in linear prices for this problem.

This AC-OPF problem was solved in 1.85 seconds using the method in~\cite{molzahn2015sparsity} with MOSEK 9 \cite{mosek2015mosek} called through MATLAB. To verify that the VCG utilities are in the core, we solved this problem under $2^2-1$ different coalitions by ignoring the core constraints that involve losing bidders, see~\cite{karaca2017game}.

\subsection{Swiss reserve procurement auctions}

The following simulation is the Swiss reserve procurement auction in the 46th week of 2014 which is based on a pay-as-bid payment rule~\cite{abbaspourtorbati2016swiss}. This auction involves 21 plants bidding for secondary reserves, 25 for positive tertiary reserves and 21 for negative tertiary reserves. The bids are discrete, that is, they are given by sets of reserve size and price pairs. Notice that the formulation in Section~\ref{sec:2} can capture such bids. The objective also includes a second stage cost corresponding to the uncertain daily auctions. Moreover, the market involves complex constraints arising from nonlinear cumulative distribution functions. These~constraints imply that the deficit of reserves cannot occur with a probability higher than 0.2\%, and they include coupling between the first and the second stage decision variables. Since this problem does not attain strong duality, meaningful linear prices cannot be derived. 
The total payments of the pay-as-bid, MPCS, and VCG mechanisms are shown in Table~\ref{tablen3}. Notice that the VCG utilities do not lie in the core since otherwise the MPCS mechanism would coincide with the VCG mechanism. As a result, the MPCS mechanism is coalition-proof, but it does not attain the DSIC property. As is discussed in Lemma~\ref{lem:lie}, we can still quantify the loss of the DSIC property by the difference between the MPCS and the VCG payments.
\begin{table}[t]

	\caption{Total payments of the reserve market (million CHF) }
	\label{tablen3}

\begin{center}
		\begin{tabular}{|l||l||l|}
			\hline
Pay-as-bid & MPCS & VCG\\
			\hline
 $2.293$ & $2.437$ & $2.529$ \\  
			\hline
		\end{tabular}
		\end{center}
	\vspace{.3cm}
\end{table}

For this market, the MPCS problem \eqref{eq:mpcs} in its current form is not computationally feasible since the core in \eqref{eq:coref} requires solutions to~\eqref{eq:main_model} under $2^{67}$ different sets of bidders. The problem can be tackled efficiently using an iterative constraint generation algorithm~\cite{day2007fair,bunz2015faster,karaca2017game}. At every iteration, the method generates the constraint with the largest violation for a provisional solution. In practice, the method requires the generation of only a few core constraints.
Using this method, the problem was solved in 8 seconds with GUROBI 7.5 \cite{gurobi} called through MATLAB via YALMIP~\cite{lofberg2005yalmip}. Computation times for the VCG mechanism and the MPCS mechanism are 580.6 and 659.2~seconds, respectively. The iterative algorithm required the generation of only 4 constraints. This shows that the MPCS mechanism can be computed in a reasonable time even when there are many market participants.

\subsection{Two-sided markets with DC-OPF constraints}

We consider the DC-OPF problem in Figure~\ref{fig:three_node}. Bids are quadratic polynomials. All lines have the same susceptance. Line limit from node $i$ to node $j$ is denoted by $C_{i,j}=C_{j,i}\in\R_+$. The optimal allocation is computed as $x^*=[0.58,\, 0.58,\, 4,\, -5.16]$ MW. The pay-as-bid mechanism yields a positive budget of~$\$48.3$. The LMP mechanism also results in a positive budget since the limits $C_{3,1}$ and $C_{3,2}$ are tight at the optimal solution~\cite[Fact~5]{wu1996folk}. This balance is $\$2.8$. For this problem, the MPCS mechanism achieves strong budget-balance with~$\$0$. Finally, the VCG mechanism has a deficit of $-\$34.8$. 

By definition, under the MPCS mechanism sum of the utilities of the bidders are higher than the one under the LMP mechanism. However, we underline that under the MPCS mechanism, not every bidder receives a utility higher than its LMP utility. In this DC-OPF problem, the total utility of the supply-side reduces by $\$4.1$, whereas the utility of the demand-side increases by $\$6.9$ when we compare the MPCS outcome with the LMP outcome. 

This problem was solved in 0.32 seconds with GUROBI 7.5 \cite{gurobi} called through MATLAB via YALMIP~\cite{lofberg2005yalmip}. To calculate the MPCS payments, we solved the problem under $2^4-1$ different coalitions. 
\begin{figure}[t]
	
	\begin{center}
		\begin{tikzpicture}[scale=0.6, every node/.style={scale=0.6}]
		\draw[-,black!80!blue,line width=.32mm] (-0.3,0) -- (-0.3,0.42);
		\draw[-,black!80!blue,line width=.32mm] (0.3,0) -- (0.3,0.42);
		\draw[-,black!80!blue,line width=.32mm] (0.3,3.58) -- (0.3,4);
		\draw[-,black!80!blue,line width=.32mm] (-0.3,3.58) -- (-0.3,4);
		\draw[-,black!80!blue,line width=.32mm] (-3.59,1.8) -- (-4,1.8);
		\draw[-,black!80!blue,line width=.32mm] (-3.59,2.2) -- (-4,2.2);
		\draw[-,black!80!blue,line width=.32mm] (3.59,1.8) -- (4,1.8);
		\draw[-,black!80!blue,line width=.32mm] (3.59,2.2) -- (4,2.2);
		\draw[-,black!80!blue,line width=.3mm] (-3.6,1.8) -- (-0.3,0.4) node[anchor=west]  at(-5.3,0.25) {\Large $C_{3,2}=2\text{ MW}$};
		\draw[-,black!80!blue,line width=.3mm] (-0.3,3.6) -- (-3.6,2.2) node[anchor=west]  at(-5.3, 3.65) {\Large $C_{3,1}=2\text{ MW}$};
		\draw[-,black!80!blue,line width=.3mm] (0.3,3.6) -- (3.6,2.2) node[anchor=west]  at(1.75,3.65) {\Large $C_{1,4}=10\text{ MW} $};
		\draw[-,black!80!blue,line width=.3mm] (0.3,0.4) -- (3.6,1.8) node[anchor=west]  at(1.75,0.25) {\Large $C_{2,4}=10\text{ MW} $};
		\draw[-,line width=.85mm] (-1,0) -- (1,0) node[anchor=west]  {\LARGE $2$};
		\draw[-,line width=.85mm] (-1,4) -- (1,4) node[anchor=west]  {\LARGE $1$};
		\draw[-,line width=.85mm] (-4,3) -- (-4,1) node[anchor=west]  {\LARGE $3$};
		\draw[-,line width=.85mm] (4,3) -- (4,1) node[anchor=west]  {\LARGE $4$};	
		\draw[<-,line width=.3mm] (0,-0.05) -- (0,-1);
		\draw[<-,line width=.3mm] (0,4.05) -- (0,5);
		\draw[<-,line width=.3mm] (-4.05,2) -- (-5,2);
		\draw[->,line width=.3mm] (4,2) -- (5,2);
		\draw (0,-1.55) circle (.55cm) node {\LARGE $G_2$} node at(3.5,-1.75) {\LARGE $c_2(x_2)= 4x_2^2 + 5x_2,$};
		\draw (0,5.55) circle (.55cm)node {\LARGE $G_1$} node at(3.5, 6.6) {\LARGE $c_1(x_1)=5x_1^2 + 4x_1,$};
		\draw (-5.55,2) circle (.55cm) node {\LARGE $G_3$}node at(-9.5,2) {\LARGE $c_3(x_3)=x_3^2 + x_3$};
		\draw (5.56,2.05) circle (.55cm)node {\LARGE $D_4$} node at(9.5, 2) {\LARGE $c_4(x_4)=x_4^2 + 20x_4$};
			\draw node at(9.5,1) {\LARGE $-8\leq x_4\leq0$};
			\draw node at(3.5,5.6) {\LARGE $x_1\geq0$};
			\draw node at(3.5,-2.75) {\LARGE $x_2\geq0$};
			\draw node at(-9.5,1) {\LARGE $x_3\geq0$};
		\end{tikzpicture}

		\caption{Two-sided DC-OPF model}\label{fig:three_node}

	\end{center}
\end{figure}
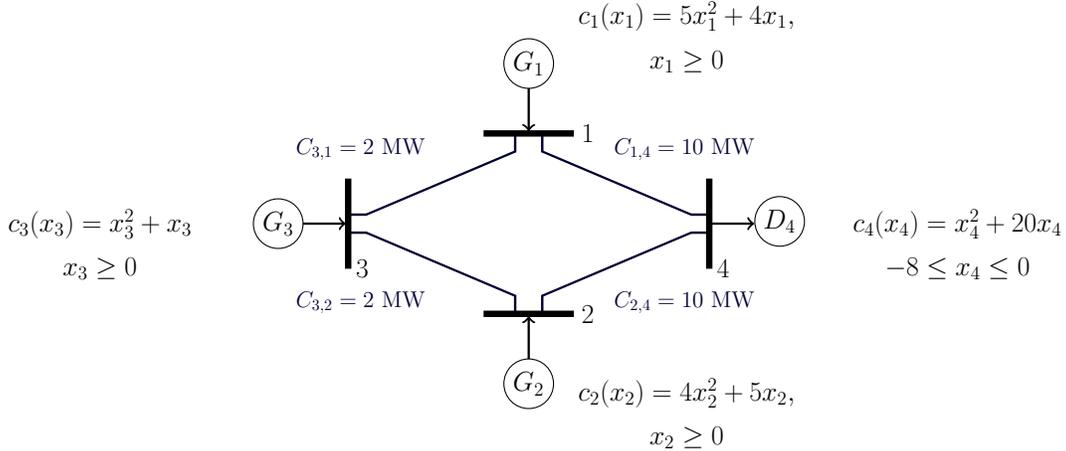

\vspace{.1cm}
\section{Conclusion}\label{sec:6}

For the general class of electricity markets, the dominant-strategy incentive-compatible VCG mechanism is susceptible to collusion and shill bidding. This motivated the design of core-selecting mechanisms for their coalition-proofness.
We showed that the well-established LMP mechanism is core-selecting, and hence coalition-proof. This result was an implication of a stronger result we proved, core-selecting mechanisms are the exact mechanisms that ensure the existence of a competitive equilibrium in linear/nonlinear prices. In contrast to LMP, we showed that core-selecting mechanisms are applicable to a broad class of markets with nonconvex bids and nonconvex constraint sets. We then characterized a class of core-selecting mechanisms that can approximate dominant-strategy incentive-compatibility without the price-taking assumption. In the case of an exchange market, we proved that core-selecting mechanisms are also budget-balanced. Our results were verified in several case studies based on realistic electricity market models. 

Our future work will explore ways to reallocate budget surplus in core-selecting mechanisms to provide correct investment~signals for transmission capacity expansion.
\vspace{.1cm}
\section*{Acknowledgments}

We are grateful to Joseph Warrington for discussions on AC-OPF, Daniel K. Molzahn for his help on the sparse moment relaxations, Jalal Kazempour for discussions on budget-balance, Sven Seuken for his feedback on the core-selecting mechanisms. We thank Swissgrid for electricity~market~data. 
\vspace{.1cm}
\appendix
\section{Proof of Lemma~\ref{lem:effce}}\label{app:lemef}
	We show that if an allocation~$x^*$ and price functions $\{\psi_l\}_{l\in L}$ constitute a competitive equilibrium, then $x^*$ is the optimal solution to the optimization problem defined by $J(\CC)$ in \eqref{eq:main_model}. By the first condition in \eqref{eq:firstcondition}, we have
	\begin{equation}\label{eq:cond1trick}
	\sum_{l\in L}\psi_l(x_l^*)-c_l(x_l^*)\geq \sum_{l\in L} \psi_l(x_l)-c_l(x_l),\,\forall x\in \X,
	\end{equation}
	where $\X=\prod_{l\in L} \X_l$. Define $(x^*,y^*)$ as the optimal solution pair to the optimization problem in \eqref{eq:secondcondition}, and $(x^*(\CC),y^*(\CC))$  as the optimal solution pair to the problem defined by $J(\CC)$. 
	Then, we obtain the following
	\begin{align*}
	\sum\limits_{l\in L} c_l(&x_l^*) + d(x^*,y^*)\\& \leq  \sum_{l\in L}\psi_l(x_l^*) - \psi_l(x_l^*(\CC)) + c_l(x_l^*(\CC))+ d(x^*,y^*)  \\
	& =  J(\CC) + \sum_{l\in L}\psi_l(x_l^*) + d(x^*,y^*) - \left(\sum_{l\in L} \psi_l(x_l^*(\CC)) + d(x^*(\CC),y^*(\CC))\right) \leq J(\CC).
	\end{align*}
	The first inequality follows from~\eqref{eq:cond1trick}. We then obtain the equality by adding and subtracting the term $d(x^*(\CC),y^*(\CC))$, and substituting $J(\CC)=\sum_{l\in L}c_l(x_l^*(\CC))+d(x^*(\CC),y^*(\CC))$. The second inequality follows since $(x^*(\CC),y^*(\CC))$ is a suboptimal feasible solution to the optimization problem in \eqref{eq:secondcondition}. We remark that $(x^*,y^*)$ is originally feasible for the problem defined by $J(\CC)$, otherwise, it would not satisfy the constraints in both \eqref{eq:firstcondition} and \eqref{eq:secondcondition}. Hence, from the last inequality, we can conclude that $(x^*,y^*)$ is in fact an optimal solution pair to the optimization problem defined by $J(\CC)$. Since previously we assumed that the optimal solution is unique according to some tie-braking rule, we obtain the desired result $x^*=x^*(\CC)$.\QEDA
\section{Proof of Theorem~\ref{thm:lmpiscs}}\label{app:thm1}
We generalize the arguments from~\cite{bikhchandani2002package,parkes2002indirect} that characterize competitive equilibria of multi-item assignment problems, to continuous goods, second stage cost, general constraints, and core-selecting mechanisms. Our proof is different from these works since it does not rely on weak duality arguments that are available to the multi-item setting with simple constraints.

Invoking Lemma~\ref{lem:effce}, for both directions of the proof we restrict our attention to the optimal allocation under truthful bids.
	
	($\impliedby$) For the market \eqref{eq:main_model}, we first prove that if a mechanism ensures the existence of a competitive equilibrium, then it is a core-selecting mechanism. To do so, we show that the revealed utilities lie in the core. 
	
	Given the bid profile $\BB=\{b_l\}_{l\in L}$, allocation $x^*(\BB)$ and price functions $\{\psi_l\}_{l\in L}$, we have;
	\begin{align}
	x^*_l(\BB) &\in \argmax_{x_l\in \hat \X_l}\, \psi_l(x_l)-b_l(x_l), \forall l\in L, \label{eq:feasubtot1} \\
	x^*(\BB) &\in \argmin_{x\in \R_+^{t|L|}} \left\{\min_{\substack{y:\, h(x,y)= 0\\ g(x,y)\leq 0 }}\, \sum\limits_{l\in L} \psi_l(x_l) + d(x,y) \right\}. \label{eq:feasubtot}
	\end{align}
	These conditions must hold because the mechanism does not know the true costs $\CC$, and it has to ensures the existence of an efficient competitive equilibrium in case the true costs are given by $\BB=\{b_l\}_{l\in L}$. Notice that the price functions depend on the bid profile, $\psi_l(x_l)=\psi_l(x_l;\BB)$, and we drop this dependence for the sake of simplicity in notation. Using $\psi_l(x^*_l(\BB))=p_l(\BB)$, the revealed utilities are defined by  $\bar u_l(\BB)=\psi_l(x^*_l(\BB))-b_l(x^*_l(\BB))$, and $\bar u_0(\BB)=-\sum_{l\in L}\psi_l(x^*_l(\BB))- d(x^*(\BB),y^*(\BB))$, where $y^*(\BB)$ is the optimal solution to \eqref{eq:feasubtot}.
	
	Next, we show that $\bar u(\BB)\in Core(\BB)$. First, observe that the individual-rationality constraints are satisfied; $\bar u_l(\BB)\geq 0$ since $0\in \hat \X_l,\, \psi_l(0)=0$, and $b_l(0)=0$. Second, we have the equality constraints in~\eqref{eq:coref}: $\sum_{l\in L} \bar u_l(\BB)+\bar u_0(\BB) = -\sum_{l\in L}b_l(x^*_l(\BB)) - d(x^*(\BB),y^*(\BB))=-J(\BB)$. Third, we show that the inequality constraints in~\eqref{eq:coref} hold, that is, \begin{equation}\label{eq:stcorein}
		-\bar u_0(\BB)\leq J(\BB_S)+\sum\limits_{l\in S}\bar u_l(\BB),\, \forall S \subset L.
	\end{equation}
	Define the following restricted problem for any subset $S\subset L$,
	\begin{equation}\label{eq:restrict}
	\begin{split}
	 \mu_0(S) =  -&\min_{\substack{x\in\hat \X,\,y\\ x_{-S}=0}}\ \sum\limits_{l\in S} \psi_l(x_l) + d(x,y)\\
	&\ \ \mathrm{  s.t. } \ h(x,y)= 0,\, g(x,y)\leq 0,
	\end{split}
	\end{equation}
	where the optimal allocation is denoted by $x^*(S)$. Because $x^*(S)$ is a feasible solution to \eqref{eq:feasubtot}, we obtain $-\bar u_0(\BB) \leq -\mu_0(S)$. 
	We then let  $x^*(\BB_S)$ be the optimal solution to $J(\BB_S)$. This solution is a suboptimal feasible solution to \eqref{eq:restrict} and hence $-\mu_0(S)\leq\sum_{l\in S} \psi_l(x^*_l(\BB_S))+d(x^*(\BB_S),y^*(\BB_S)).$ 
	Then, it suffices to show that 
	\begin{equation}\label{eq:deseq}\begin{split}
	\sum_{l\in S} \psi_l(x^*_l(\BB_S))+&d(x^*(\BB_S),y^*(\BB_S))\\ 
	&\leq J(\BB_S)+\sum_{l\in S}\bar u_l(\BB),\,\forall S\subset L,
	\end{split}\end{equation}since this would imply the inequality in \eqref{eq:stcorein}. 
	Via the condition in~\eqref{eq:feasubtot1}, we have 
	\begin{equation*}
	\psi_l(x^*_l(\BB_S))-b_l(x^*_l(\BB_S))\leq \psi_l(x^*_l(\BB))-b_l(x^*_l(\BB))=\bar u_l(\BB).
	\end{equation*}
Summing the inequality above over all $l\in S$, we obtain 
	$$\sum_{l\in S} \psi_l(x^*_l(\BB_S))-b_l(x^*_l(\BB_S))\leq\sum_{l\in S}\bar u_l(\BB).$$
By adding $d(x^*(\BB_S),y^*(\BB_S))$ on both sides and reorganizing, the above inequality yields~\eqref{eq:deseq}.	
Consequently, we have $-\bar u_0(\BB)\leq J(\BB_S)+\sum_{l\in S}\bar u_l(\BB)$, for any $S\subset L$. Hence, the revealed utilities lie in $Core(\BB)$. 
	
	($\implies$) We now prove that any core-selecting mechanism ensures the existence of an efficient competitive equilibrium. In other words, we show that, for the truthful optimal allocation $x^*(\CC)\in \X$, there exists a set of price functions $\psi_l:\R_+^t\rightarrow\R$,~$\forall l$ such that the conditions in Definition~\ref{def:compeq} are satisfied, and $\psi_l(x^*_l(\CC))=p_l(\CC)$.  Consider the utility allocation $u\in Core(\CC)$ of a core-selecting mechanism under truthful bidding. Then, define the price functions $\{\psi_l\}_{l\in L}$ as follows
	\begin{equation*}
	\psi_l(x) = 
	\begin{cases}
	0 & x=0\\
	c_l(x)+u_l& x\in \X_l \setminus \{0\}\\
	\infty& \text{otherwise}. \\
	\end{cases}
	\end{equation*}
	For these price functions, the first condition in~\eqref{eq:firstcondition} holds by construction. To show that, we study two possible cases. If $x^*_l(\CC)$ is nonzero, then $x^*_l(\CC)\in \argmax_{x_l\in \X_l}\, \psi_l(x_l)-c(x_l)$ since $u_l\geq 0$. On the other hand, if $x^*_l(\CC)=0$, then $u_l=0$ and $0\in \argmax_{x_l\in \X_l}\, 0$.

	We prove the second condition \eqref{eq:secondcondition} by contradiction. Assume there exists $x,\,y$ such that $h(x,y)=0$, $g(x,y)\leq0$, and{\begin{equation}\label{eq:contrastate}
	 \sum_{l\in L} \psi_l(x_l^*(\CC)) + d(x^*(\CC),y^*(\CC))>\sum_{l\in L} \psi_l(x_l) +d(x,y).
	\end{equation}} Define a subset $S\subseteq L$ such that $x_l=0$ for all $l\in L\setminus S$. Observe that if $x_l>0,\,\forall l$, then $S=L$. 
	Then, the core implies
	{\begin{equation}\label{eq:coretr}
	-u_0=J(\CC)+\sum\limits_{l\in L} u_l= \sum_{l\in L} \psi_l(x_l^*(\CC)) + d(x^*(\CC),y^*(\CC)),
	\end{equation}}
	The second equality follows from the definition of the price functions. Using the second equality in~\eqref{eq:coretr}, the inequality in~\eqref{eq:contrastate} is equivalent to
	\begin{equation*}
	J(\CC)+\sum\limits_{l\in L} u_l>\sum_{l\in S} \psi_l(x_l) + d(x,y).
	\end{equation*}
	By the definition of $\psi_l$, we obtain 
	\begin{equation*}
	J(\CC)+\sum\limits_{l\in L\setminus S} u_l>\sum_{l\in S} c_l(x_l) + d(x,y) \geq J(\CC_S),
	\end{equation*}
	where the last inequality is from the feasible suboptimality of $(x,y)$ for the problem defined by $J(\CC_S)$. This is because $h(x,y)=0$, $g(x,y)\leq0$, and $x\in \X$ (otherwise, the price functions are unbounded). 
	
	Using the first equality in \eqref{eq:coretr}, we have $u_0+\sum_{l\in S}u_l<-J(\CC_S)$ for $S\subseteq L$. This contradicts $u\in Core(\CC)$, and the mechanism cannot be core-selecting. We conclude that $x^*(C)$ is optimal for the central operator given the price functions. As a result, $x^*(\CC)$ and $\{\psi_l\}_{l\in L}$ constitute a competitive equilibrium. Finally, from $p_l(\CC)=c_l(x^*_l(\CC))+u_l$, we have $\psi_l(x^*_l(\CC))=p_l(\CC)$ for each bidder $l$. This concludes the proof. \QEDA
\section{Comparison of the LMP and the VCG Payments}\label{app:lmpvcg}
\begin{proposition}\label{prop:upb}
	Given any bid profile $\BB$, for every bidder~$l$ the payment under the LMP mechanism is upper bounded by the payment under the VCG mechanism.
\end{proposition}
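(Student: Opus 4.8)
The plan is to reduce the claimed inequality on payments to an inequality on revealed utilities, and then exploit the fact, established in Theorem~\ref{thm:lmpiscs}, that the LMP mechanism is core-selecting. First I would observe that both mechanisms compute the same efficient allocation $x^*(\BB)$ of the economic dispatch~\eqref{eq:main_model}, and that each of their payments has the form $p_l(\BB)=b_l(x^*_l(\BB))+\bar u_l(\BB)$ with the identical bid term $b_l(x^*_l(\BB))$. Consequently, the desired inequality $p_l^{\text{LMP}}(\BB)\leq p_l^{\text{VCG}}(\BB)$ is equivalent to the utility inequality $\bar u_l^{\text{LMP}}(\BB)\leq \bar u_l^{\text{VCG}}(\BB)$, where $\bar u_l^{\text{VCG}}(\BB)=J(\BB_{-l})-J(\BB)$.

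Next, I would show that \emph{every} utility vector in the core is dominated componentwise by the VCG utility. Fix $\bar u\in Core(\BB)$ and apply the core inequality in~\eqref{eq:coref} with the coalition $S=L\setminus\{l\}$, which gives $\bar u_0+\sum_{k\neq l}\bar u_k\geq -J(\BB_{-l})$. Subtracting this from the budget equality $\bar u_0+\sum_{k\in L}\bar u_k=-J(\BB)$ isolates $\bar u_l$ and yields $\bar u_l\leq J(\BB_{-l})-J(\BB)=\bar u_l^{\text{VCG}}(\BB)$. Since Theorem~\ref{thm:lmpiscs}, together with the existence of the LMP competitive equilibrium under strong duality, guarantees that the LMP mechanism is core-selecting, its revealed utility satisfies $\bar u^{\text{LMP}}(\BB)\in Core(\BB)$, so this bound applies to it in particular.

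Combining the two steps gives $\bar u_l^{\text{LMP}}(\BB)\leq \bar u_l^{\text{VCG}}(\BB)$ for each $l$, hence $p_l^{\text{LMP}}(\BB)\leq p_l^{\text{VCG}}(\BB)$, which is the claim. I do not expect a genuine obstacle here: the only point requiring care is ensuring that the LMP payment is well defined and core-selecting, which presupposes the DC-OPF convexity and strong-duality setting under which the LMP price function $\psi_l^{\text{LMP}}(x)=\lambda_l^*(\BB)\,x$ exists; granting that, the argument is a short consequence of Theorem~\ref{thm:lmpiscs} and the single coalition constraint $S=L\setminus\{l\}$. An equivalent phrasing would invoke the identity $\bar u_l^{\text{VCG}}(\BB)=\max\{\bar u_l\mid \bar u\in Core(\BB)\}$ from~\cite{orcun2018game}, but the direct subtraction above uses only the upper-bound direction and so avoids needing the full maximality statement.
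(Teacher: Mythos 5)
Your proof is correct and follows essentially the same route as the paper's: the paper also reduces the claim to showing that every revealed utility in $Core(\BB)$ is bounded above by $\bar u_l^{\text{VCG}}(\BB)=J(\BB_{-l})-J(\BB)$, using the blocking constraint of the coalition $L\setminus\{l\}$ together with the budget equality, and then invokes the fact that the LMP utilities lie in the core via Theorem~\ref{thm:lmpiscs}. The only cosmetic difference is that the paper phrases the core argument as a contradiction (a utility exceeding the VCG level would be blocked by $L_{-l}$), whereas you obtain the same bound by direct subtraction of the two core conditions.
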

\begin{proof}
	A similar result was proven in~\cite{xu2017efficient}, using convex analysis in the context of DC-OPF markets. We provide a simple and more general proof applicable to any setting where the LMP mechanism ensures the existence of a competitive equilibrium.
	The proof is an adaptation of \cite[Theorem~5]{ausubel2002ascending} that compares the utilities of iterative ascending auctions with that of the VCG mechanism. Since the LMP utilities lie in the core, it suffices to show that the VCG revealed utilities are greater than any other revealed utility in the core. 
	
	The VCG payment of bidder $l$ is given by $$p_l^{\text{VCG}}=b_l(x^*_l(\BB))+(J(\BB_{-l})-J(\BB)),$$  whereas the revealed VCG utility is $\bar u_l^{\text{VCG}}=J(\BB_{-l})-J(\BB)$.
	Assume there exists a revealed utility allocation $\tilde{u}\in Core(\BB)$ where $\tilde{u}_l> \bar u_l^{\text{VCG}}$. These utilities are blocked by the coalition $L_{-l}$; $$-J(\BB_{-l})> -J(\BB)-\tilde{u}_l=\tilde{u}_0+\sum_{k\in L_{-l}}\tilde{u}_k,$$ where the equality follows from the definition of the core. This contradicts that $\tilde{u}\in Core(\BB)$. We conclude that the core utilities are upper bounded by the ones under the VCG mechanism. We obtain the proposition. \QEDA
\end{proof}

Proof of Proposition~\ref{prop:upb} simplifies the arguments in~\cite{xu2017efficient} greatly. Furthermore, this upper bound is tight for some core-selecting mechanism since $\bar u_l^{\text{VCG}}=\max\left\{\bar u_l\ \rvert\ \bar u\in Core(\BB)\right\}$, see \cite[Theorem~2]{orcun2018game}. 
\section{Proof of Lemma~\ref{lem:lie}}\label{app:lemlie}
	Assume there exists a bid $\hat b_l:\hat \X_l \rightarrow \mathbb R_+$ such that 
	\begin{equation*}
	\big[\hat{b}_l(x_l^*(\hat{{\BB}}_l, \BB_{-l})) - c_l(x_l^*(\hat{{\BB}}_l, \BB_{-l})) + \bar u_l(\hat{{\BB}}_l, \BB_{-l})\big] - \bar u_l({\CC_l, \BB_{-l}}) > \big[J(\BB_{-l})-J(\CC_l, \BB_{-l})\big]-\bar u_l({\CC_l, \BB_{-l}}),
	\end{equation*}
	where $\hat{\BB}_l=\{\hat b_l\}$.
	This is equivalent to the existence of a deviation that is more profitable than the given upper bound. 
	Notice that the following holds $$\bar u_l({\hat{{\BB}}_l, \BB_{-l}})\leq \bar u^{\text{VCG}}_l(\hat{{\BB}}_l, \BB_{-l})=J(\BB_{-l})-J(\hat{{\BB}}_l, \BB_{-l}),$$ since $\bar u_l^{\text{VCG}}(\hat{{\BB}}_l, \BB_{-l})=\max\{\bar u_l\ \rvert\ \bar u\in Core(\hat{{\BB}}_l, \BB_{-l})\}$~\cite{orcun2018game}. 
	Combining the inequalities above, we have
	\begin{equation*}
	\begin{split}
	\hat{b}_l(x_l^*(\hat{{\BB}}_l, \BB_{-l})) - c_l(x_l^*(\hat{{\BB}}_l, \BB_{-l}))& + J(\BB_{-l})-J(\hat{{\BB}}_l, \BB_{-l}) \\
	&> J(\BB_{-l})-J(\CC_l, \BB_{-l}).
	\end{split}
	\end{equation*}
	We highlight that the first term is the VCG utility under a non-truthful bid, whereas the second term is the VCG utility under a truthful bid. The strict inequality above contradicts the dominant-strategy incentive-compatibility of the VCG mechanism. We conclude that $\bar u^{\text{VCG}}_l(\CC_l, \BB_{-l})-\bar u_l({\CC_l, \BB_{-l}})$ is an upper bound on the gain from a unilateral deviation. 
	
	Next, define $\epsilon$ to be a small positive number which is required to avoid ties. It is straightforward to show that the following bid achieves exactly the truthful VCG utility, and hence the exact upper bound in $\bar u^{\text{VCG}}_l(\CC_l, \BB_{-l})-\bar u_l({\CC_l, \BB_{-l}})$:
	\begin{equation*}
	\hat{b}_l(x) = 	\begin{cases}
	0 & x=0\\
	c_l(x)+\bar u^{\text{VCG}}_l(\CC_l, \BB_{-l})-\epsilon& x\in \X_l \setminus 0\\
	\infty& \text{otherwise}. \\ \end{cases}
	\end{equation*}
	Since the market solves for the optimal allocation in \eqref{eq:main_model}, if bidder~$l$ is allocated a positive quantity while bidding truthfully, then this bidder is also allocated a positive quantity while bidding $\hat{b}_l$. Moreover, under any core-selecting mechanism, we have $\epsilon\geq\bar{u}_l^{\text{VCG}}(\hat{\BB}_l, \BB_{-l})\geq\bar{u}_l(\hat{\BB}_l, \BB_{-l})\geq0$ where $\hat{\BB}_l=\hat b_l$. As a result, by bidding $\hat b_l$, the bidder~$l$ obtains its truthful VCG utility, ${u}_l(\hat{\BB}_l, \BB_{-l})=\bar u^{\text{VCG}}_l(\CC_l, \BB_{-l})-\epsilon+\bar{u}_l(\hat{\BB}_l, \BB_{-l})$, and achieves exactly the upper bound in the lemma. \QEDA
\section{Proof of Theorem~\ref{thm:mpcsmax}}\label{app:maxmpc}
	For the proof, we ignore the second term in the objective of the MPCS mechanism since it is required only for tie-breaking purposes. Assume bidders are revealing their true costs $\CC=\{c_l\}_{l\in L}$ under the MPCS mechanism. We can reformulate the problem \eqref{eq:mpcs} as follows,
	\begin{align}
	\bar u^{\text{MPCS}}(\CC) &= \argmax_{u \in \text{Core}(\CC)}\ \sum_{l\in L} u_l - \sum_{l\in L} \bar u_l^{\text{VCG}}(\CC) \nonumber \\
	&=\argmin_{u \in \text{Core}(\CC)}\ \sum_{l\in L} (\bar u_l^{\text{VCG}}(\CC)-u_l)\ . \label{eq:mpcs3}
	\end{align}
	Invoking Lemma~\ref{lem:lie}, the problem in \eqref{eq:mpcs3} implies that the MPCS mechanism minimizes the sum of maximum profits of each bidder from a unilateral deviation among all other core-selecting mechanisms.
\QEDA

\section{Characterizing Deficit Under the VCG Mechanism}\label{app:deficvcg}
Notice that the Myerson-Satterthwaite impossibility theorem does not rule out the possibility of having realizations of the VCG mechanism that are budget-balanced. 
Next, we extend the impossibility theorem by showing that in a DC-OPF market with no line limits the VCG mechanism is at most strongly budget-balanced.
\begin{proposition}\label{prop:bbvcg}
	The VCG mechanism never yields a positive utility for the operator in a DC-OPF market with no line limits.
\end{proposition}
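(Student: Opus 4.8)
The plan is to reduce the claim to a purely combinatorial inequality among the optimal values $J(\BB_{-l})$ and then exploit the convexity of the DC-OPF market through an averaging argument. First I would record the structural simplifications: in a DC-OPF market there is no second-stage cost, so $d\equiv 0$, and removing all line limits leaves only the aggregate power-balance constraint $\sum_{l\in L}x_l=0$ as the coupling between bidders (any balanced injection profile can be routed without congestion). Thus $J(\BB)=\min_{x\in\hat\X,\,\sum_l x_l=0}\sum_{l\in L}b_l(x_l)$, where the feasible set $\hat\X=\prod_l\hat\X_l$ and each bid $b_l$ are convex with $0\in\hat\X_l$ and $b_l(0)=0$, and $J(\BB_{-l})$ is the same program with the added constraint $x_l=0$ (equivalently $\sum_{k\neq l}x_k=0$).

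Next I would write the operator's VCG utility in closed form. Using the Clarke-pivot payments $p_l=b_l(x^*_l(\BB))+J(\BB_{-l})-J(\BB)$ together with $\sum_l b_l(x^*_l(\BB))=J(\BB)$, the operator's utility collapses to
\[
u_0^{\text{VCG}}(\BB)=-\sum_{l\in L}p_l(\BB)=(\lvert L\rvert-1)\,J(\BB)-\sum_{l\in L}J(\BB_{-l}).
\]
Hence the proposition is equivalent to the inequality $(\lvert L\rvert-1)\,J(\BB)\leq\sum_{l\in L}J(\BB_{-l})$, and the whole proof reduces to establishing this bound.

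To establish it I would use an averaging (convex-combination) argument. Let $x^{(l)}$ optimize $J(\BB_{-l})$, so $x^{(l)}_l=0$ and $\sum_{k\neq l}x^{(l)}_k=0$, and define $\bar x=\tfrac{1}{\lvert L\rvert-1}\sum_{l\in L}x^{(l)}$. Its $k$-th block $\bar x_k=\tfrac{1}{\lvert L\rvert-1}\sum_{l\neq k}x^{(l)}_k$ is an equally weighted convex combination of $\lvert L\rvert-1$ points of $\hat\X_k$ (using $x^{(k)}_k=0\in\hat\X_k$), so $\bar x\in\hat\X$, and summing the blocks gives $\sum_k\bar x_k=0$; thus $\bar x$ is feasible for $J(\BB)$. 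Convexity of each $b_k$ with $b_k(0)=0$ yields $b_k(\bar x_k)\leq\tfrac{1}{\lvert L\rvert-1}\sum_{l\neq k}b_k(x^{(l)}_k)$, and summing over $k$ and reindexing the double sum gives $J(\BB)\leq\sum_k b_k(\bar x_k)\leq\tfrac{1}{\lvert L\rvert-1}\sum_{l\in L}J(\BB_{-l})$, which is exactly the desired inequality.

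The main obstacle is the feasibility step: the naive candidate obtained by zeroing bidder $l$ in $x^*(\BB)$ violates balance, since $\sum_{k\neq l}x^*_k=-x^*_l\neq0$, so one cannot simply reuse the full-market solution in each leave-one-out program. The averaging construction is precisely what restores the balance constraint while remaining inside the convex feasible set, and it is here that convexity of the DC-OPF bids and constraint sets, together with the absence of line limits (which is what makes balance the only coupling), is indispensable. Once the inequality holds, $u_0^{\text{VCG}}(\BB)\leq0$ follows immediately, so the VCG mechanism can at best break even, i.e.\ be strongly budget-balanced.
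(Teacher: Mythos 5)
Your proof is correct, but it takes a genuinely different route from the paper's. The paper argues by contradiction via a replica-economy construction: assuming $u_0(\CC)>0$, it derives $\lvert W\rvert\,J(\CC) > J(\CC_W)+\sum_{l\in W}J(\CC_{-l}) \geq J(\lvert W\rvert\,\CC)$ by concatenating the leave-one-out optima into a feasible point of the $\lvert W\rvert$-fold replicated market, and then invokes a separate KKT-based lemma (Lemma~\ref{lem:templem}) asserting $J(q\,\CC)=q\,J(\CC)$ to reach the contradiction $J(\lvert W\rvert\,\CC)<J(\lvert W\rvert\,\CC)$. You instead prove the equivalent inequality $(\lvert L\rvert-1)\,J(\BB)\leq\sum_{l\in L}J(\BB_{-l})$ directly, by averaging the leave-one-out optima: $\bar x=\tfrac{1}{\lvert L\rvert-1}\sum_{l}x^{(l)}$ is feasible for $J(\BB)$ (each block is a proper convex combination of $\lvert L\rvert-1$ points of $\hat\X_k$ because $x^{(k)}_k=0$, and balance is preserved by linearity), after which Jensen's inequality and a reindexing of the double sum give the bound. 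The two arguments exploit the same convexity, and your averaging is in effect the de-replicated version of the paper's construction; what yours buys is that it dispenses with the auxiliary replication lemma entirely, hence needs no KKT conditions or constraint qualification --- only convexity of the sets $\hat\X_l$ and of the bids with $b_l(0)=0$ --- so it works verbatim for general convex (not merely polytopic) constraint sets and establishes $u_0(\BB)\leq 0$ constructively for every bid profile rather than by contradiction, while the paper's route yields the replication identity $q\,J(\CC)=J(q\,\CC)$ as a statement of independent interest. Two small points to tidy: your closed form $u_0=(\lvert L\rvert-1)J(\BB)-\sum_{l\in L}J(\BB_{-l})$ sums over all of $L$ whereas the paper pays only allocated bidders, but this is harmless since $J(\BB_{-l})=J(\BB)$ whenever $x^*_l(\BB)=0$ (the full-market optimum remains feasible after forcing $x_l=0$, and $b_l(0)=0$); and the averaging requires $\lvert L\rvert\geq 2$, the case $\lvert L\rvert=1$ being trivial because the balance constraint then forces $x_1=0$ and $u_0=0$.
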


We need the following lemma for our proof.
\begin{lemma}\label{lem:templem}
	Assume $J$ is modeled by
	\begin{equation}\label{eq:simplej}
		J(\CC)=\min_{x_l\in \X_l,\,\forall l}\, \sum_{l\in L} c_l(x_l)\ \mathrm{s.t.}\, \sum_{l\in L } x_l=0,
	\end{equation}
	where $c_l,\,\forall l$ are convex increasing and $\X_l,\,\forall l$ are polytopic constraints. Define $q\,\CC$ as the bid profile consisting of $\CC$ replicated $q$ times. Then, for any $q\in\N_+$, we have $qJ(\CC) = J(q\,\CC)$.
\end{lemma}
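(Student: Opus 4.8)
The plan is to establish the two inequalities $J(q\,\CC)\leq qJ(\CC)$ and $J(q\,\CC)\geq qJ(\CC)$ separately, exploiting convexity of the cost functions together with convexity of the constraint sets, and the linearity of the single balance constraint.

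For the upper bound, I would take an optimal allocation $x^*$ of the original problem defining $J(\CC)$ and lift it to the replicated problem by assigning every one of the $q$ copies of bidder $l$ the same allocation $x^*_l\in\X_l$. Since $\sum_{l\in L}x^*_l=0$, replication preserves the balance constraint, so this lifted allocation is feasible for $J(q\,\CC)$, and its objective value is exactly $q\sum_{l\in L}c_l(x^*_l)=qJ(\CC)$. Hence $J(q\,\CC)\leq qJ(\CC)$.

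For the reverse inequality, I would start from an optimal allocation of the replicated problem, writing $x_l^{(i)}$ for the allocation to the $i$-th copy of bidder $l$, and form the per-bidder average $\bar x_l=\frac1q\sum_{i=1}^q x_l^{(i)}$. Two observations are key: first, each $\bar x_l$ lies in $\X_l$, because $\X_l$ is a polytope, hence convex, and $\bar x_l$ is a convex combination of points of $\X_l$; second, averaging the linear balance constraint of the replicated problem gives $\sum_{l\in L}\bar x_l=0$, so $\bar x=(\bar x_l)_{l\in L}$ is feasible for the original problem. Applying Jensen's inequality to each convex $c_l$ yields $c_l(\bar x_l)\leq\frac1q\sum_{i=1}^q c_l(x_l^{(i)})$, and summing over $l\in L$ gives $J(\CC)\leq\sum_{l\in L}c_l(\bar x_l)\leq\frac1q J(q\,\CC)$, i.e.\ $qJ(\CC)\leq J(q\,\CC)$. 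Combining the two bounds delivers the claim $qJ(\CC)=J(q\,\CC)$.

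The argument uses only convexity of the $c_l$ and of the $\X_l$; the monotonicity (``increasing'') hypothesis is not needed for this lemma, and is presumably invoked by the proposition that this lemma supports. The only step requiring care is verifying feasibility of the symmetrized allocation $\bar x$ — both that it remains in the constraint sets (convexity of $\X_l$) and that it still satisfies the single linear balance equation — after which Jensen's inequality provides the inequality in the correct direction. I do not anticipate any genuine obstacle beyond checking these routine convexity facts.
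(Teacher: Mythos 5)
Your proof is correct, but it follows a genuinely different route from the paper's. The paper argues via optimality conditions: it proposes as candidate optimum for $J(q\,\CC)$ the concatenation of the optimal solution of $J(\CC)$ taken $q$ times, pairs it with the Lagrange multiplier of the single balance constraint of $J(\CC)$ (shared across all copies), and verifies that this primal--dual pair satisfies the KKT conditions of the replicated problem, which are necessary and sufficient here by convexity and constraint qualification. You instead sandwich the optimal value: replicating the optimum gives $J(q\,\CC)\leq qJ(\CC)$, and symmetrization --- averaging each bidder's $q$ copies, which stays in $\X_l$ by convexity, preserves the linear balance constraint, and by Jensen's inequality does not increase the per-copy cost --- gives $qJ(\CC)\leq J(q\,\CC)$. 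Your argument buys elementarity and generality: it needs no differentiability of the $c_l$, no duality, and no constraint-qualification discussion, whereas the paper's KKT verification implicitly presumes the stationarity condition is meaningful (smoothness or subdifferential calculus). What the paper's approach buys in return is extra structure: it exhibits an explicit optimal primal--dual pair, showing in particular that the replicated market clears at the same shadow price as the original, which is the economically suggestive fact exploited in Proposition~\ref{prop:bbvcg}. Your side remark is also accurate --- monotonicity of the $c_l$ plays no role in either proof, only convexity does. The single point worth a line in a polished write-up is the tacit existence of optimizers in both problems, which holds since the feasible sets are nonempty (as $0\in\X_l$) and compact (polytopes) and the objectives are continuous.
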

\begin{proof}
	We prove this by showing that the optimal solution to $J(q\,\CC)$ is given by concatenating the decision variables in the optimal solution of $J(\CC)$ $q$ times. 
	Since the problem defined by $J$ satisfies constraint qualification conditions, the KKT conditions are both necessary and sufficient for the optimality of a solution~\cite{bertsekas1999nonlinear}. We see that the KKT conditions of $J(q\,\CC)$ are satisfied by a primal solution which is the concatenation of the optimal solution of $J(\CC)$ in~\eqref{eq:simplej}~$q$~times, and a dual solution which is the Lagrange multiplier of the equality constraint of $J(\CC)$ in~\eqref{eq:simplej}. This concludes that $qJ(\CC) = J(q\,\CC)$. \QEDA
\end{proof}	

Note that this market model includes the DC-OPF markets where the network graph is connected and there are no line limits. 
We are now ready to prove Proposition~\ref{prop:bbvcg}.
\begin{proof}(Proof of Proposition~\ref{prop:bbvcg})	
We prove this by contradiction. Under the VCG mechanism, assume the operator has a positive utility:
	\begin{equation*}
	0< u_0(\CC) = - J(\CC) - \sum_{l\in W} (J(\CC_{-l})-J(\CC)), 
	\end{equation*}
	where $W\subseteq L$ is the set of bidders whose allocations are not zero. By reorganizing, we obtain
	\begin{equation*}
	|W| J(\CC)>J(\CC_W)+\sum_{l\in W} J(\CC_{-l})\geq J(|W|\,\CC) .
	\end{equation*}
	where $|W|\,\CC$ is a bid profile consisting of $\CC$ replicated $|W|$ times.
	The last inequality follows because the allocation of the problems on the left is a suboptimal feasible allocation to the problem on the right. Note further that $|W|J(\CC) = J(|W|\,\CC)$. This follows from Lemma~\ref{lem:templem}. We obtain a contradiction $J(|W|\,\CC)< J(|W|\,\CC)$. Hence, the VCG mechanism achieves at most strong budget-balance, and it never yields a positive utility for the operator in this case. \QEDA
\end{proof}

As a remark, the LMP mechanism is strongly budget-balanced for the same market under any bid profile~\cite[Fact 5]{wu1996folk}. 
Under the VCG mechanism, it is straightforward to create a two-bidder example of the market in \eqref{eq:simplej} that yields a negative utility for the operator. 

\begin{example}
	Suppose there are two bidders in the market~\eqref{eq:simplej}. The cost function of bidder 1 is given by $c_1(x_1)=x_1,\, 0\leq x_1\leq1$. The cost function of bidder 2 is given by $c_2(x_2)=3x_2,\, -1\leq x_1\leq0$. Under the VCG mechanism, bidder $1$ receives the payment $\$3$ since $p_1^{\text{VCG}} = 1 + (0-(-2))= \$3$. Whereas bidder 2 makes the payment $\$1$ since $p_2^{\text{VCG}} = -3 + (0-(-2))= -\$1$. Hence, the central operator has a $\$2$ deficit.
\end{example}
\section{Coalition-Proofness of the VCG Mechanism in an Exchange Market}\label{app:coalpvcg}
We first bring in the definition of supermodularity.
\begin{definition}\label{def:sup}
	A function $J$ is \textit{supermodular} if  
	$J(\BB_S)-J(\BB_{S\setminus{l}})\leq J(\BB_R)-J(\BB_{R\setminus{l}})$
	for all coalitions $S\subseteq R\subseteq L$ and for each bidder $l\in S$ under any bid profile. 
\end{definition}

We show that the VCG mechanism is not coalition-proof in an exchange market by showing that the market objective function~$J$ can never be supermodular.
For the following, we assume that if there is at most one bidder in the exchange, no exchange occurs, that is, $J(\BB_l)=0,$ for all $l\in L$ and $J(\emptyset)=0$.
\begin{proposition}\label{pro:cpvcg} 
	In an exchange market, the function $J$ in~\eqref{eq:main_model} is never supermodular unless no exchange occurs, $J(\BB_S)=0,\, \forall S\subseteq L$.
\end{proposition}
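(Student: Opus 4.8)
The plan is to prove the statement in the forward direction: assuming $J$ is supermodular, I will show that $J(\BB_S)=0$ for every $S\subseteq L$, which is exactly the ``no exchange occurs'' conclusion. The two ingredients I will combine are the monotonicity of $J$ and the normalization $J(\BB_l)=0$, $J(\emptyset)=0$ stated just before the proposition. The conceptual crux is that supermodularity forces every marginal contribution to be nonnegative, while monotonicity forces it nonpositive, pinning all marginal contributions to zero.

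First I would re-establish that $J$ remains nonincreasing in the exchange setting, i.e. $J(\BB_R)\le J(\BB_S)$ whenever $S\subseteq R$. Enlarging the active set from $S$ to $R$ only relaxes the constraint $x_{-S}=0$ in \eqref{eq:main_model}, and any feasible point for the $S$-problem stays feasible for the $R$-problem with the \emph{same} objective value upon setting $x_l=0$ for $l\in R\setminus S$ and invoking $b_l(0)=0$; hence the minimum can only decrease. The key step is then to instantiate the supermodularity inequality of Definition~\ref{def:sup} at a singleton. Taking $S=\{l\}$ and any $R\supseteq\{l\}$ gives
\[
J(\BB_{\{l\}})-J(\BB_{\emptyset})\le J(\BB_R)-J(\BB_{R\setminus\{l\}}).
\]
Since $J(\BB_{\{l\}})=0$ and $J(\emptyset)=0$ by assumption, the left-hand side vanishes, yielding $J(\BB_{R\setminus\{l\}})\le J(\BB_R)$. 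But monotonicity gives the reverse inequality $J(\BB_R)\le J(\BB_{R\setminus\{l\}})$, so the two coincide: removing any single bidder from any coalition $R$ leaves $J$ unchanged.

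The proof then finishes by a straightforward telescoping induction on $\lvert R\rvert$: peeling bidders off one at a time, $J(\BB_R)=J(\BB_{R\setminus\{l\}})=\cdots=J(\emptyset)=0$, so $J(\BB_S)=0$ for all $S\subseteq L$. I do not anticipate a serious obstacle; the only point requiring care is confirming that monotonicity genuinely transfers to the exchange model, where costs may be negative. This is precisely where the assumption $b_l(0)=0$ is pivotal, since it guarantees that an inactive bidder contributes nothing and therefore that adding bidders can never raise the optimal value. Once this is in place, the argument is essentially a one-line consequence of combining the singleton supermodularity inequality with monotonicity.
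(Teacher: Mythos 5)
Your proof is correct and follows essentially the same route as the paper's: both pin every marginal contribution of a bidder to zero by combining a supermodularity inequality anchored at the normalization $J(\BB_{\{l\}})=J(\emptyset)=0$ with the monotonicity $J(\BB_R)\le J(\BB_{R\setminus\{l\}})$, then induct over coalition size. Your singleton instantiation of Definition~\ref{def:sup} followed by telescoping is a mild streamlining of the paper's argument (which first forces $J(\BB_{\{j,k\}})=0$ for all pairs via two supermodular inequalities and then builds up), but the ingredients and the logic are the same.
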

\begin{proof}
	Assume that the function $J$ is supermodular. Then, for bidders $i,j,k\in L$, we have
	\begin{equation*}
	J(\BB_{\{i,j\}})-J(\BB_j)\leq J(\BB_{\{i,j,k\}})-J(\BB_{\{j,k\}}).
	\end{equation*}
	This concludes that $J(\BB_{\{j,k\}})\leq 0$, since $J(\BB_j)=0$ and $J(\BB_{\{i,j\}})\geq J(\BB_{\{i,j,k\}})$. Supermodularity further implies that
	\begin{equation*}
	J(\BB_k)-J(\emptyset)\leq J(\BB_{\{j,k\}})-J(\BB_j).
	\end{equation*}
	This yields $J(\BB_{\{j,k\}})\geq 0$ since $J(\emptyset)=0$. Hence, we obtain $J(\BB_{\{j,k\}})= 0$. Moreover, this holds for any bidder pair $j,k$. By using this result, we repeat the steps above to obtain $J(\BB_{\{i,j,k\}})= 0,\, \forall i,j,k$. We conclude that similar steps can be repeated until we obtain $J(\BB_S)=0,$ for every subset $S\subseteq L$.	
	Note that this holds under any bid profile $\BB$. \QEDA
\end{proof}

Invoking \cite[Theorem~3]{karaca2017game}, we conclude that the VCG mechanism is not coalition-proof. We highlight that, to the best of our knowledge, this impossibility result for exchanges is novel. An intuition behind it is that an exchange allows for different kind of manipulations than the ones in a one-sided auction. For instance, a bidder can enter the market both as a buyer and as a seller. Then, it can manipulate the outcome by changing the amount it buys and it sells, knowing that its final allocation is given by their difference.

\section{Proof of Theorem~\ref{thm:corewb}}\label{app:bb}
	Consider the revealed utility allocation $\bar u\in Core(\BB)$ of a core-selecting mechanism. Using $\bar u_0=-J(\BB)-\sum_{l\in L}\bar u_l$, we derive an equivalent characterization of the inequality constraints in the core as follows
	$$\sum_{l\in L\setminus S}\bar u_l\leq J(\BB_S)-J(\BB),\, \forall S \subseteq L.$$
	Setting $K=L\setminus S$, these inequalities are equivalent to $\sum_{l\in K}\bar{u}_l({\BB})\leq J({\BB}_{-K} ) - J({\BB})$, for every set of bidders $K\subseteq L$. 
	By keeping the most binding constraints, we obtain
	\begin{equation}\label{eq:bind}
	\sum_{l\in K}\bar{u}_l\leq J({\BB}_{-K} ) - J({\BB}),\, \forall K\subseteq W,
	\end{equation}
	where $W\subseteq L$ is the set of bidders whose allocations are not zero.
	Next, we assume that the pay-as-bid mechanism is budget-balanced, $u_0(\BB)=-J(\BB)\geq 0$ under any bid profile. This assumption is satisfied in many exchange markets, for instance, combinatorial exchanges~\cite{parkes2002achieving}, DC-OPF exchange problems~\cite{wu1996folk} and two-sided electricity markets~\cite{xu2017efficient}.  
	Using this assumption, the inequality in \eqref{eq:bind} implies that $$\sum_{l \in W } \bar u_l\leq J(\mathcal{B}_{-W}) - J(\mathcal{B})\leq- J(\mathcal{B}),$$ since $J(\mathcal{B}_{-W})\leq0$. By reorganizing, we have $- J(\mathcal{B})-\sum_{l \in L } \bar u_l\geq 0$, since bidders who receive zero allocation are not paid. Using the equality constraint of the core in \eqref{eq:coref}, we conclude that $\bar u_0=u_0\geq 0$. Hence, the central operator ends up with a nonnegative utility. \QEDA
	
\bibliographystyle{IEEEtran}
\bibliography{IEEEabrv,library}	
\end{document}